\newcommand{\nn}{\nonumber}
\newcommand{\minp}{(\min,+)}
\newcommand{\R}{\mathbf{R}}
\newcommand{\Rm}{\mathbf{R}_{\min}}
\newcommand{\ra}{\rightarrow}
\newcommand{\om}{\otimes}
\newcommand{\op}{\oplus}
\newcommand{\RA}{\Rightarrow}
\newcommand{\LA}{\Leftarrow}
\newenvironment{proof}{{\bf Proof:} }{}
\newtheorem{theorem}{Theorem}
\newtheorem{lemma}[theorem]{Lemma}
\newtheorem{assumption}{Assumption}
\newtheorem{definition}[theorem]{Definition}
\newtheorem{corollary}{Corollary}
\newtheorem{example}{Example}
\def\R{\mathrm{R}}
\newcounter{subequation}[equation]
\def\mathdisplay#1{%
  \ifmmode \@badmath
  \else
    $$\def\@currenvir{#1}%
    \let\dspbrk@context\z@
    \let\tag\tag@in@display \SK@equationtrue 
    \global\let\df@label\@empty \global\let\df@tag\@empty
    \global\tag@false
    \let\mathdisplay@push\mathdisplay@@push
    \let\mathdisplay@pop\mathdisplay@@pop
    \if@fleqn
      \edef\restore@hfuzz{\hfuzz\the\hfuzz\relax}%
      \hfuzz\maxdimen
      \setbox\z@\hbox to\displaywidth\bgroup
        \let\split@warning\relax \restore@hfuzz
        \everymath\@emptytoks \m@th $\displaystyle
    \fi
}
\newcounter{algostep}
\newcounter{acalgorithm}
\title{Approximate Dynamic Programming based on Projection onto the $\minp$ subsemimodule}
\author{Chandrashekar L$^\dag$ \and Shalabh Bhatnagar$^\$$}
\affiliation{$^\dag$Dept Of CSA, IISc {\tt chandrul@csa.iisc.ernet.in}, $^\$$Dept Of CSA, IISc {\tt shalabh@csa.iisc.ernet.in}}
\begin{document}
\maketitle
\begin{abstract}
We develop a new Approximate Dynamic Programming (ADP) method for infinite horizon discounted reward Markov Decision Processes (MDP) based on projection onto a subsemimodule. We approximate the value function in terms of a $\minp$ linear combination of a set of basis functions whose $\minp$ linear span constitutes a subsemimodule. The projection operator is closely related to the $\emph{Fenchel}$ transform. Our approximate solution obeys the $\minp$ Projected Bellman Equation (MPPBE) which is different from the conventional Projected Bellman Equation (PBE). We show that the approximation error is bounded in its $L_\infty$-norm. We develop a $\emph{Min-Plus}$ Approximate Dynamic Programming (MPADP) algorithm to compute the solution to the MPPBE. We also present the proof of convergence of the MPADP algorithm and apply it to two problems, a grid-world problem in the discrete domain and mountain car in the continuous domain. 
\end{abstract}

\section{Introduction}
Markov Decision Process (MDP) is a useful mathematical framework for posing, analyzing and solving stochastic optimal sequential decision making problems. An MDP is characterized by its state space, action space, the model parameters namely reward structure, and the probability of transition from one state to another under any given action. We consider an MDP with $n$ states and $d$ actions. A policy $u$ specifies the manner in which states are mapped to actions. The value of a state under a policy is the discounted sum of rewards starting in that state and performing actions according to that policy. Thus a given policy $u$ induces a map from the state space to reals. This map is called the value-function, denoted by $J_u \in \R^n$. Solving an MDP means computing the $\emph{optimal}$ value function $J^*=\underset{u}{\max} J_u$ and the $\emph{optimal}$ policy $u^*=\underset{u}{\arg\max} J_u$. The Bellman operator $T$ (\cite{BertB}) is defined using the model parameters of an MDP, and is a map $T\colon \R^n \ra \R^n$. The Bellman Equation (BE) states that $J^*=TJ^*$ (\cite{BertB}), i.e., the optimal value function $J^*$, is a fixed point of $T$. Most methods to solve MDP such as value/policy iteration (\cite{BertB}) are based on solving the BE.\\
\indent The phenomenon called $\emph{Curse of Dimensionality}$ (or simply $\emph{curse}$) refers to the fact that the size of the state space grows exponentially in the number of the state variables. Most problems of practical interest suffer from the $\emph{curse}$, i.e., have large number of states. In such situations it is expensive to compute the optimal policy/value-function and we need to resort to the use of approximate methods. Approximate Dynamic Programming (ADP) refers to an entire spectrum of methods that aim to obtain sub-optimal policies and approximate value-functions. 
Value-function based ADP methods consider a family of functions and pick a function that approximates the value function well. Typically, the family of functions considered is the linear span of a set of basis functions. This is known as linear function approximation (LFA) wherein the value function is approximated as $J^*\approx\tilde{J}=\Phi r^*$. Here $\Phi$ is an $n \times k$ $\emph{feature}$ matrix and $r^* \in \R^k$ is the weight vector with $k<<n$.\\
\indent Given a $\Phi$ matrix, ADP methods vary in the way they learn the weight vector and hence the approximate solution varies across the various ADP methods. In a class of ADP methods (\cite{Tsit}) $r^*$ satisfies the below relation known as the Projected Bellman Equation (PBE). 
\begin{align}\label{pbe}
\Phi r^*= \Pi T \Phi r^*,
\end{align} 
where the projection matrix, $\Pi=\Phi(\Phi^\top D \Phi )^{-1} \Phi^\top$ and $D$ is any positive definite matrix. The approximation error can be bounded as below (\cite{Tsit}):
\begin{align}\label{pbebnd}
||\Phi r^*-J^*|| \propto ||\Pi J^* -J^*||_D.
\end{align}
Alternatively, there are ADP methods such as the Approximate Linear Program (ALP), wherein $r^*$ does not obey a PBE, and is the solution to the below linear program. 
\begin{align}\label{alpb}
 &\mbox{ }\min ~c^\top \Phi r \\
      &\quad s.t\quad \Phi r\geq T\Phi r, \nonumber
\end{align}
where $c \in \R^n$ is such that $c(i)\geq 0, i=1,\ldots,n$ and $\overset{n}{\underset{i=1}{\sum}} c(i)=1$. The approximation error is bounded as below (\cite{ALP}):
\begin{align}\label{alpbnd}
||\Phi r^*-J^*||_{1,c} \propto ||\Pi J^* -J^*||_\infty,
\end{align}
where $||v||_{1,c}=\sum_{i=1}^{n}|v(i)|c(i)$.
It is evident from \eqref{pbebnd} and \eqref{alpbnd} that the choice of ADP method is dictated by the kind of approximation guarantees required in the application at hand.\\
\indent In this paper, we develop a ADP method based on LFA in $\minp$ algebra called $\minp$ approximate dynamic programming (MPADP). The $\minp$ algebra differs from conventional algebra, in that $+$ and $\times$ operators are replaced by $\min$ and $+$ respectively. $\Rm=(\R\cup +\infty,\min,+)$ is a semiring and semimodule $\Rm^n$ can be defined over $\Rm$ in a manner similar to the vector space $\R^n$ over $\R$. Naturally, $J^* \in \Rm^n$, and given an $n\times k$ feature matrix $\Phi$, with columns $\{\phi_j,j=1,\ldots,k\}$ , we consider the set $\mathcal{V}=\{v|v=\Phi\om r\stackrel{\Delta}{=}\min(\phi_1+r(1),\phi_2+r(2),\ldots,\phi_k+r(k), r \in \R^k\}$, where $\om$ in $\Phi \om r$ emphasizes the fact that the approximation is linear in $\minp$. Our function class $\mathcal{V}$ is a subsemimodule as opposed to the subspace in the conventional LFAs. Akin to the PBE \eqref{pbe}, in order to obtain the approximate value function $\tilde{J}=\Phi \om r^*$ we project onto the subsemimodule $\mathcal{V}$, i.e., $r^*$ obeys the following $\minp$ Projected Bellman Equations (MPPBE). 
\begin{align}\label{basic}
\Phi\om r ^*=\Pi_M T\Phi \om r^*, \Phi\om r^* \in \mathcal{V}
\end{align}
where $\Pi_M \colon \R^n \ra \mathcal{V}$, is the $\minp$ projection operator (defined in section~\ref{semiring}).\\
\indent Approximate Dynamic Programs based on the $\minp$ semiring have been developed for deterministic control problems \cite{akian,Gaubert} using the fact that the Bellman operator $T$ is $\minp-\emph{linear}$. However, in the case of infinite horizon discounted reward MDP, the presence of probability transition matrix, and discount factor destroys the linearity of the Bellman operator.
This makes our MPADP algorithm significantly different from \cite{akian,Gaubert}. Also the projection operator $\Pi_M$ onto subsemimodules have been studied before in the literature \cite{cohen1996kernels}. Nevertheless, we use them in the context of finding approximate solution to MDPs. Our specific contributions in this paper are as given below.
\begin{enumerate}
\item We develop for the first time an ADP method that makes use of $\minp$ LFA. Another novel aspect of our approach is the $\minp$ PBE.
\item We characterize the approximation error of $\tilde{J}=\Phi \om r^*$, the solution to MPPBE in \eqref{basic}. In particular, we show that the error bound of the form $||J^*-\tilde{J}||_\infty\propto \min_r||J^*-\Phi\om r||_\infty $.
\item We show that $\Pi_M$ is similar to the $\emph{Fenchel}$ transform and the MPPBE equation is similar to the ALP formulation.
\item We present the MPADP algorithm to solve \eqref{basic}. We also provide the proof of convergence for our algorithm.
\item We demonstrate our method on two benchmark planning problems namely the grid world and mountain car.
\end{enumerate}
The rest of the paper is organized as follows. In section~\ref{mdp}, we provide a brief introduction to discounted reward infinite horizon MDPs. In section~\ref{semiring}, we define the $\Rm$ semiring, and semimodules, and the $\minp$ projection operator $\Pi_M$ onto subsemimodules. In section~\ref{fenchel}, we discuss the similarities of the $\minp$ projection operator $\Pi_M$ and the $\emph{Fenchel-Legendre}$ transform. In section~\ref{mpprojected}, we introduce the MPPBE equation and derive the approximation guarantees. Section~\ref{mpadpalgo} contains the MPADP algorithm with a proof of convergence. Section~\ref{exper} contains experiments conducted on the ``grid world" and ``mountain car" problems. In section~\ref{concl}, we present the conclusions and also discuss future work.

\section{Discounted Reward Markov Decision Processes}\label{mdp}
The ADP methods that we develop in this paper are for infinite horizon discounted reward Markov decision processes. Here, we provide a brief overview of MDPs (please refer to \cite{BertB, Puter} for a more detailed presentation). We consider an MDP with state space, $S=\{1,2,\ldots,n\}$ and action set, $A=\{1,2,\ldots,d\}$. We denote by $p_a(i,j)$ the probability of transitioning from state $i$ to $j$ ($i,j \in S$) under action $a \in A$. For simplicity, we assume that all actions $a \in A$ are feasible in every state $s \in S$. The reward is given by the map $g \colon S \rightarrow \mathbf{R}$ and the discount factor is $\alpha$, $0<\alpha<1$.\\
\indent By policy we mean a sequence $\mu=\{\mu_0,\mu_1,\ldots\}$ of functions $\mu_i$ that map states to actions at time $i$. When $\mu_i=\mu, \forall i=1,2,\ldots$, the policy is said to be $\emph{stationary}$. Stationary policies are of two types:
\begin{enumerate}
\item Deterministic, wherein $\mu=\{u,u,\ldots,u,\ldots\}$, where $u \colon S \rightarrow A$. We denote the class of stationary deterministic policies (SDP) by $U$, and a given SDP by $u$.
\item Randomized, wherein $\mu=\{\pi,\pi,\ldots,\pi,\ldots\}$, where given any $s \in S$, $\pi(s,\cdot)$ is a distribution among actions. Thus in state $s$ action $a$ is performed with probability $\pi(s,a)$. We denote the class of stationary randomized policies (SRP) by $\Pi$, and a given SRP by $\pi$.
\end{enumerate}
Under a stationary policy $u$ (or $\pi$) the MDP is a Markov chain and we denote its probability transition kernel by $P_u=(p_{u(i)}(i,j),i=1\mbox{ to }n, j=1\mbox{ to }n)$ (or $P_\pi$). The discounted reward starting from state $s$ following policy $u$ is denoted by $J_u(s)$, where 
\begin{align}\label{disc}
J_u(s)=\mathbf{E}[\sum^\infty_{t=0} \alpha^t g(s_t)|s_o=s,u].
\end{align}
Here $\{s_t\}$ is the trajectory of the Markov chain under $u$. We call $J_u(s)$ the value function for policy $u$.
We denote the optimal policy by $u^*$ where 
\begin{align}\label{optpol}
u^*=\arg\max_{u \in U}  J_u(s), \forall s \in S.
\end{align}
The optimal value function is given by $J^*(s)=J_{u^*}(s), \forall s \in S$. The optimal value function and optimal policy are related by the Bellman equation below:
\begin{align}
\label{bell} J^*(s)&= \max_{a \in A} (g(s)+\alpha\sum^n_{s'=1}p_a(s,s') J^*(s')),\\
\label{bellpolicy} u^*(s)&= \arg\max_{a \in A} (g(s) +\alpha\sum^n_{s'=1}p_a(s,s')J^*(s')).
\end{align}
Once an MDP is posed, our aim is to find $u^*$. Again, once $J^*$ is known, $u^*$ can always be found by plugging $J^*$ in \eqref{bellpolicy}. Thus, in most cases, we are interested in computing $J^*$. 
Taking cue from \eqref{bell} we define the Bellman operator $T \colon \mathbf{R}^n \rightarrow \mathbf{R}^n$ as
\begin{align}\label{bellop}
(TJ)(s)=\max_{a\in A}( g(s)+\alpha \sum^n_{j=1}p_a(s,s')J(s')), J \in \mathbf{R}^n.
\end{align}
Given $J \in \R^n$, $TJ$ is the one-step, greedy value function. Also $J^*$ is a fixed point of $T$ i.e., $J^*=TJ^*$, and from Lemma~\ref{maxnorm}, Corollary~\ref{uni}, it follows that it is also unique (for proofs, please see \cite{BertB}).\\
\begin{lemma}\label{maxnorm}
$T$ is a $\max$-norm contraction operator, i.e., given $J_1, J_2 \in \mathbf{R}^n$
\begin{align}
||TJ_1-TJ_2||_\infty\leq \alpha ||J_1-J_2||_\infty
\end{align}
\end{lemma}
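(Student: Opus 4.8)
The plan is to establish the pointwise estimate $|(TJ_1)(s) - (TJ_2)(s)| \leq \alpha \, ||J_1 - J_2||_\infty$ for every state $s \in S$, from which the stated max-norm bound follows immediately by maximizing over the finite state space. The whole argument rests on two elementary facts. The first is that for any two real-valued functions $f,h$ on the action set $A$ one has $\max_{a} f(a) - \max_{a} h(a) \leq \max_{a}\bigl(f(a) - h(a)\bigr)$, which lets one compare the two maximizations defining $T$. The second is that for each fixed state $s$ and action $a$ the transition probabilities $\{p_a(s,j)\}_{j=1}^{n}$ are nonnegative and sum to one, so they act as a convex averaging weight.

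First I would fix an arbitrary state $s$ and let $a^*$ denote an action attaining the maximum in $(TJ_1)(s)$. Because $(TJ_2)(s)$ is itself a maximum over all actions, it dominates the bracketed expression evaluated at the particular action $a^*$. Subtracting this lower bound from $(TJ_1)(s)$ causes the reward term $g(s)$ to cancel and leaves
$$(TJ_1)(s) - (TJ_2)(s) \leq \alpha \sum_{j=1}^{n} p_{a^*}(s,j)\bigl(J_1(j) - J_2(j)\bigr).$$
Bounding each difference $J_1(j)-J_2(j)$ by $||J_1 - J_2||_\infty$ and invoking $\sum_{j=1}^{n} p_{a^*}(s,j) = 1$ then gives the one-sided inequality $(TJ_1)(s) - (TJ_2)(s) \leq \alpha \, ||J_1 - J_2||_\infty$.

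By the symmetry of the construction, interchanging the roles of $J_1$ and $J_2$ and repeating the same reasoning yields $(TJ_2)(s) - (TJ_1)(s) \leq \alpha \, ||J_1 - J_2||_\infty$, and the two one-sided bounds combine into the absolute-value estimate $|(TJ_1)(s) - (TJ_2)(s)| \leq \alpha \, ||J_1 - J_2||_\infty$. Since $s$ was arbitrary, taking the maximum over $S$ finishes the proof. There is no deep obstacle in this argument; the only point demanding a moment of care is the max-of-difference step, where $(TJ_2)(s)$ must be evaluated at the maximizer $a^*$ of $(TJ_1)(s)$ rather than at its own optimal action, so as to secure a valid one-sided bound before the symmetrization.
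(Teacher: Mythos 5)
Your proof is correct and complete: the one-sided bound via evaluating $(TJ_2)(s)$ at the maximizer $a^*$ of $(TJ_1)(s)$, the use of $\sum_{j} p_{a^*}(s,j)=1$ as a convex averaging weight, and the symmetrization step are all handled properly. The paper itself does not prove this lemma (it defers to the cited reference \cite{BertB}), and your argument is exactly the standard textbook proof one would find there, so there is nothing to reconcile.
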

\begin{corollary}\label{uni}
$J^*$ is a unique fixed point of $T$.
\end{corollary}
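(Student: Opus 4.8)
The plan is to invoke the contraction property established in Lemma~\ref{maxnorm} together with the completeness of the underlying space, i.e., to apply the Banach fixed point theorem. Since $(\mathbf{R}^n, ||\cdot||_\infty)$ is a complete metric space and $T$ is a contraction with modulus $\alpha < 1$, the theorem guarantees both the existence of a fixed point and its uniqueness. Existence has already been asserted in the text (namely $J^* = TJ^*$), so the substantive claim of the corollary is uniqueness, and I would focus the argument there.

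First I would isolate the uniqueness argument, which needs nothing beyond Lemma~\ref{maxnorm}. Suppose $\hat{J}$ is any fixed point of $T$, so that $T\hat{J} = \hat{J}$, and recall $TJ^* = J^*$. Applying the contraction bound to the pair $J^*, \hat{J}$ gives
\begin{align}
||J^* - \hat{J}||_\infty = ||TJ^* - T\hat{J}||_\infty \leq \alpha ||J^* - \hat{J}||_\infty .
\end{align}
Rearranging yields $(1-\alpha)||J^* - \hat{J}||_\infty \leq 0$, and since $0 < \alpha < 1$ forces $1 - \alpha > 0$, we conclude $||J^* - \hat{J}||_\infty = 0$, hence $\hat{J} = J^*$. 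Thus no fixed point other than $J^*$ can exist.

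There is essentially no hard step here: the entire difficulty has been absorbed into Lemma~\ref{maxnorm}, whose contraction estimate does all the work, and the remainder is a one-line manipulation exploiting $\alpha < 1$. If one wishes to make the corollary self-contained, establishing existence as well rather than taking $J^* = TJ^*$ as given, the only additional ingredient is completeness of $\mathbf{R}^n$ under the sup norm: the Banach fixed point theorem then produces the fixed point as the limit of the Cauchy sequence $\{T^m J_0\}$ for an arbitrary $J_0 \in \mathbf{R}^n$, and the uniqueness argument above shows this limit coincides with $J^*$. I expect the only point requiring any care in that fuller route is the standard verification that the iterates of a contraction form a Cauchy sequence, which is routine.
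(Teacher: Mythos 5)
Your proof is correct, and it follows the same route the paper intends: the paper itself gives no proof for this corollary but defers to \cite{BertB}, where uniqueness is established exactly as you do, by applying the contraction estimate of Lemma~\ref{maxnorm} to two putative fixed points and using $\alpha<1$, with existence via the Banach fixed point theorem on the complete space $(\mathbf{R}^n,\|\cdot\|_\infty)$. No gaps; your one-line uniqueness argument is the substantive content and it is sound.
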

Further, Bellman operator $T$ exhibits two more important properties presented in the following Lemmas (see \cite{BertB} for proofs)
\begin{lemma}\label{monotone}
$T$ is a monotone map, i.e., given $J_1,J_2 \in \R^n$ such that $J_2\geq J_1$ then $T J_2\geq T J_1$. Further if $J\in \R^n$ is such that $J\geq TJ$, it follows that $J\geq J^*$.
\end{lemma}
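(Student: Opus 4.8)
The plan is to treat the two assertions separately: monotonicity will follow directly from the defining formula~\eqref{bellop} for $T$, and the sub-solution bound $J \geq TJ \Rightarrow J \geq J^*$ will combine this monotonicity with the contraction property of Lemma~\ref{maxnorm}.

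For monotonicity, I would fix a state $s$ and a single action $a$, and observe that the inner expression $g(s) + \alpha \sum_{s'=1}^n p_a(s,s') J(s')$ is monotone in $J$: because $\alpha > 0$ and every $p_a(s,s') \geq 0$, feeding in $J_2 \geq J_1$ (componentwise) can only increase this quantity. Taking $\max$ over $a \in A$ on each side then preserves the inequality, since $f(a) \geq h(a)$ for all $a$ forces $\max_a f(a) \geq \max_a h(a)$. This gives $(TJ_2)(s) \geq (TJ_1)(s)$ for every $s$, i.e.\ $TJ_2 \geq TJ_1$. This step is routine.

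For the second claim, starting from $J \geq TJ$ I would apply monotonicity repeatedly to build the decreasing chain $J \geq TJ \geq T^2 J \geq \cdots$, so that $J \geq T^k J$ holds for all $k$. By Lemma~\ref{maxnorm} the operator $T$ is a $\max$-norm contraction, and by Corollary~\ref{uni} its unique fixed point is $J^*$; the contraction mapping theorem therefore yields $T^k J \to J^*$ in $||\cdot||_\infty$. Since componentwise inequalities are preserved under limits, letting $k \to \infty$ in $J \geq T^k J$ gives $J \geq J^*$.

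I expect the only real content to sit in this last step, where I must pass to the limit: the monotone iteration on its own yields merely a decreasing sequence, and it is the contraction property (together with uniqueness of the fixed point) that pins down the limit as exactly $J^*$. Everything else is a direct consequence of nonnegativity of the transition probabilities and the order-preserving nature of $\max$.
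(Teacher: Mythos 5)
Your proof is correct. The paper gives no in-text proof of this lemma, deferring instead to \cite{BertB}; your argument---monotonicity of each inner expression $g(s)+\alpha\sum_{s'}p_a(s,s')J(s')$ from $\alpha>0$ and $p_a(s,s')\geq 0$, preservation under $\max_{a}$, then iterating $J\geq TJ\geq T^2J\geq\cdots$ and invoking the contraction property of Lemma~\ref{maxnorm} together with Corollary~\ref{uni} to identify $\lim_k T^k J=J^*$ and pass componentwise inequalities to the limit---is precisely the standard textbook argument the citation points to, so there is nothing to reconcile.
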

\begin{lemma}\label{shift}
Given $J\in \R^n$, and $k \in \R$ and $\mathbf{1} \in \R^n$ a vector with all entries $1$, then 
\begin{align}
T(J+k\mathbf{1})=TJ+\alpha k\mathbf{1}.
\end{align}
\end{lemma}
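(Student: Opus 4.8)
The plan is to prove the identity pointwise, i.e., to show that $(T(J+k\mathbf{1}))(s)=(TJ)(s)+\alpha k$ holds for every state $s\in S$, from which the claimed vector identity follows immediately. First I would expand the left-hand side using the definition of the Bellman operator in \eqref{bellop}, substituting the shifted value $J(s')+k$ for the value at each successor state $s'$.

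The key step is to separate the inner sum. Writing out the argument of the maximization gives $g(s)+\alpha\sum_{s'=1}^n p_a(s,s')\big(J(s')+k\big)$, and by linearity of the sum this splits into $g(s)+\alpha\sum_{s'=1}^n p_a(s,s')J(s')+\alpha k\sum_{s'=1}^n p_a(s,s')$. The crucial observation --- and really the only substantive point in the argument --- is that for each fixed action $a$ the row $p_a(s,\cdot)$ is a probability distribution, so $\sum_{s'=1}^n p_a(s,s')=1$. Hence the added term collapses to exactly $\alpha k$, independently of the action $a$.

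Since $\alpha k$ does not depend on $a$, it can be pulled outside the maximization over $A$. What remains inside the max is precisely the argument defining $(TJ)(s)$ in \eqref{bell}, so the whole expression equals $(TJ)(s)+\alpha k$. Because $s$ was arbitrary, this establishes $T(J+k\mathbf{1})=TJ+\alpha k\mathbf{1}$.

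I do not expect any genuine obstacle here: the result is a direct computation, and its entire content rests on the stochasticity of the transition rows. The only thing to be careful about is that the constant shift inside the maximization is uniform across actions, which is exactly what guarantees that the maximizing action is unaffected and only the value is displaced by the factor $\alpha k$ (note the discount factor $\alpha$ appearing because the shift is applied to the future value, not the current reward $g(s)$).
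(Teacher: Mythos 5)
Your proof is correct: the pointwise expansion of the Bellman operator, the split of the sum using $\sum_{s'}p_a(s,s')=1$, and pulling the action-independent constant $\alpha k$ out of the maximization is exactly the standard argument for this shift property. Note that the paper itself gives no proof for this lemma (it defers to the cited reference \cite{BertB}), and your computation is precisely the canonical one found there, so there is nothing further to reconcile.
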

$J^*$ can also be seen to be the solution to the following linear program
\begin{align}\label{lp}
 &\mbox{ }\min ~c^\top J \\
      &\quad s.t\quad J\geq TJ, \nonumber
\end{align}
where $c \in \R^n, c\geq 0$.\\
Similarly one can define the Bellman operator restricted to a policy $u$ as
\begin{align}\label{belpol}
(T_u J)(s)=g(s)+\alpha \sum_{s'}p_{u(s)}(s,s') J(s'),
\end{align} 
and it is straightforward to show that the value function of policy $u$ obeys the Bellman equation $J_u=T_u J_u$.\\
\indent Due to the $\emph{curse}$, as the number of variables increase, it is hard to compute exact values of $J^*$ and $u^*$. Approximate Dynamic Programming (ADP) methods make use of \eqref{bell} and dimensionality reduction techniques to compute suboptimal policies $\tilde{u}$ instead of $u^*$. ADP methods approximate $J^*$ by means of $\emph{lower}$ dimensional quantities, i.e. $J^*\approx \tilde{J}$, where $\tilde{J} \in V \subset \R^n$. Typically $V$ is the subspace spanned by a set of preselected basis functions $\{\phi_i, i=1,\ldots,k\}, \phi_i \in \R^n$. Let $\Phi$ be the $n\times k$ matrix with columns $\phi_i,i=1,\ldots,k$, and $V=\{\Phi r|r\in \R^k\}$, then the approximate value function $\tilde{J}$ is of the form $\Phi r^*$ for some $r^* \in \R^k$, i.e., $J^*\approx \tilde{J}=\Phi r^*$. Computing $r^* \in \R^k (k<< n)$ is easier than computing $J^* \in \R^n$. Since $J^*$ is not known, one cannot obtain its projection onto $V$. Hence one obtains $r^*$ either as a solution to the PBE in \eqref{pbe} or solution to the ALP \eqref{alpb}. It is important to note that whilst PBE methods are based on value iteration\cite{BertB}, the ALP method is based on the LP formulation \eqref{lp}. Once the approximate value function $\tilde{J}$ is obtained, the suboptimal/$\emph{greedy}$ policy $\tilde{u}$ is obtained as below.
\begin{align}\label{subpol}
\tilde{u}(s)&= \arg\max_{a \in A} (g(s) +\alpha\sum^n_{s'=1}p_a(s,s')\tilde{J}(s')).
\end{align}
The following lemma characterizes the degree of sub-optimality of the greedy policy $\tilde{u}$.
\begin{lemma}\label{subopt}
Let $\tilde{J}=\Phi r^*$ be the approximate value function and $\tilde{u}$ be as in \eqref{subpol}, then 
\begin{align}
||J_{\tilde{u}}-J^*||_\infty \leq \frac{2}{1-\alpha}||J^*-\tilde{J}||_\infty
\end{align}
\end{lemma}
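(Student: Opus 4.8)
The plan is to bound the quantity of interest by inserting the approximate value function $\tilde{J}$ and controlling the two resulting pieces. Writing $\epsilon = ||J^*-\tilde{J}||_\infty$, the triangle inequality gives $||J_{\tilde{u}}-J^*||_\infty \le ||J_{\tilde{u}}-\tilde{J}||_\infty + \epsilon$, so the whole problem reduces to showing $||J_{\tilde{u}}-\tilde{J}||_\infty \le \frac{1+\alpha}{1-\alpha}\,\epsilon$; adding the remaining $\epsilon$ then produces exactly the factor $\frac{1+\alpha+(1-\alpha)}{1-\alpha}=\frac{2}{1-\alpha}$.

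First I would record two facts about the policy-restricted operator $T_{\tilde{u}}$ from \eqref{belpol}. Since $T_{\tilde{u}}J = g + \alpha P_{\tilde{u}}J$ with $P_{\tilde{u}}$ a stochastic matrix, it is an $\alpha$-contraction in the max norm by the same argument as in Lemma~\ref{maxnorm}, and $J_{\tilde{u}}$ is its unique fixed point, $J_{\tilde{u}}=T_{\tilde{u}}J_{\tilde{u}}$, as already noted after \eqref{belpol}. The second and more important fact is that because $\tilde{u}$ is the greedy policy with respect to $\tilde{J}$ defined in \eqref{subpol}, applying $T_{\tilde{u}}$ to $\tilde{J}$ selects in each state precisely the maximizing action, so that $T_{\tilde{u}}\tilde{J} = T\tilde{J}$.

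With these in hand I would bound $||J_{\tilde{u}}-\tilde{J}||_\infty$ by a self-referential estimate. Inserting $T_{\tilde{u}}\tilde{J}$ yields $||J_{\tilde{u}}-\tilde{J}||_\infty \le ||T_{\tilde{u}}J_{\tilde{u}}-T_{\tilde{u}}\tilde{J}||_\infty + ||T_{\tilde{u}}\tilde{J}-\tilde{J}||_\infty \le \alpha||J_{\tilde{u}}-\tilde{J}||_\infty + ||T\tilde{J}-\tilde{J}||_\infty$, using the contraction of $T_{\tilde{u}}$ and the greedy identity. The residual $||T\tilde{J}-\tilde{J}||_\infty$ is then bounded by $||T\tilde{J}-TJ^*||_\infty + ||J^*-\tilde{J}||_\infty \le (1+\alpha)\epsilon$ via Lemma~\ref{maxnorm} together with $J^*=TJ^*$. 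Collecting the $||J_{\tilde{u}}-\tilde{J}||_\infty$ terms on one side and dividing by $1-\alpha$ (legitimate since $0<\alpha<1$) gives the required $\frac{1+\alpha}{1-\alpha}\,\epsilon$, which combined with the first paragraph completes the argument.

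The main obstacle is conceptual rather than computational: it is recognizing the greedy identity $T_{\tilde{u}}\tilde{J}=T\tilde{J}$ and, crucially, that the error should be measured against $\tilde{J}$ rather than against $T\tilde{J}$, so that the contraction of $T_{\tilde{u}}$ can be closed into a self-referential inequality that is solved for $||J_{\tilde{u}}-\tilde{J}||_\infty$. Everything else is a routine application of the triangle inequality and the contraction property already furnished by Lemma~\ref{maxnorm}.
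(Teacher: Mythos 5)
Your proposal is correct and takes essentially the same route as the paper: both proofs insert $\tilde{J}$ via the triangle inequality, use the greedy identity $T_{\tilde{u}}\tilde{J}=T\tilde{J}$ together with the contraction of $T$ to bound the residual $||T_{\tilde{u}}\tilde{J}-\tilde{J}||_\infty\leq(1+\alpha)||J^*-\tilde{J}||_\infty$, and then combine $\frac{1+\alpha}{1-\alpha}\epsilon+\epsilon=\frac{2}{1-\alpha}\epsilon$. The only cosmetic difference is how the factor $\frac{1}{1-\alpha}$ arises: the paper writes $J_{\tilde{u}}-\tilde{J}=(I-\alpha P_{\tilde{u}})^{-1}(T_{\tilde{u}}\tilde{J}-\tilde{J})$ and bounds the inverse, whereas you close a self-referential contraction inequality for $||J_{\tilde{u}}-\tilde{J}||_\infty$ --- the same Neumann-series estimate in fixed-point form.
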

\begin{proof}
We know that
\begin{align}
\label{top} (T_{\tilde{u}}) \tilde{J}(s) 	&= g(s)+\alpha \sum_{s'} p_{\tilde{u}(s)} (s,s') \tilde{J}(s'),\\
\label{bot} J_{\tilde{u}}(s)		&=g(s)+\alpha \sum_{s'} p_{\tilde{u}(s)}(s,s')J_{\tilde{u}}(s').
\end{align}
Hence we can write by subtracting \eqref{top} from \eqref{bot} 
\begin{align}
J_{\tilde{u}}-\tilde{J}&=T_{\tilde{u}}\tilde{J}-\tilde{J} +\alpha P_{\tilde{u}}(J_{\tilde{u}}-\tilde{J})\nn\\
J_{\tilde{u}}-\tilde{J}&=(I-\alpha P_{\tilde{u}})^{-1}(T_{\tilde{u}}\tilde{J}-\tilde{J})\nn\\
||J_{\tilde{u}}-\tilde{J}||_\infty&\leq \frac{1}{1-\alpha}||T_{\tilde{u}}\tilde{J}-\tilde{J}||_\infty.\nn
\end{align}
We know from \eqref{subpol} that $T_{\tilde{u}}\tilde{J}=T\tilde{J}$. Also from the fact that $J^*=TJ^*$ and the contraction property of $T$, we know $||T\tilde{J}-J^*||_\infty\leq \alpha ||\tilde{J}-J^*||_\infty$ and $||T_{\tilde{u}}\tilde{J}-\tilde{J}||_\infty\leq (1+\alpha)||\tilde{J}-J^*||_\infty$. Hence we have
\begin{align}
||J_{\tilde{u}}-J^*||_\infty&=||J_{\tilde{u}}-J^*+\tilde{J}-\tilde{J}||_\infty\nn\\
&\leq  ||J_{\tilde{u}}-\tilde{J}||_\infty+||\tilde{J}-J^*||_\infty\nn\\
&\leq \frac{1}{1-\alpha}||T_{\tilde{u}}\tilde{J}-\tilde{J}||_\infty +||J^*-\tilde{J}||_\infty\nn\\
&\leq \frac{1+\alpha}{1-\alpha}||J^*-\tilde{J}||_\infty +||J^*-\tilde{J}||_\infty\nn\\
&\leq \frac{2}{1-\alpha}||J^*-\tilde{J}||_\infty\nn
\end{align}
\end{proof}
\indent Irrespective of the formulation (PBE or ALP), it is important to choose the basis such that $||J^*-\tilde{J}||_\infty$ is as small as possible. Error bounds for the PBE based methods are in the $L_2$-norm (\cite{Tsit}) and hence the sub-optimality of the greedy policy cannot be ascertained. However, in the case of ALP the sub-optimality of the greedy policy is characterized by error bounds in a modified $L_1$-norm. In this paper, we look at a novel method of approximating $J^*$ using linear function approximators (LFA), which are linear in $\minp$. As we shall see in section~\ref{mpprojected}, our approximate solution has error bounds in the $L_\infty$ norm and hence the sub-optimality of the greedy policy can be ascertained via Lemma~\ref{subopt}. In the next section we describe the $\minp$ LFAs.

\section{Semiring, Semimodules and Projections}\label{semiring}
We define the semiring as $\mathbf{R}_{\min}=(\mathbf{R}\cup \{ +\infty \},\min,+)$ . In $\mathbf{R}_{\min}$, the usual multiplication is replaced with $+$, and addition is replaced by $\min$ given as below.
\begin{definition}
\begin{align}
&\text{Addition:} &x \op y&= \min(x,y)\\
&\text{Multiplication:} &x \om y&= x+y
\end{align}
\end{definition}
Henceforth we use, $(+, \cdot)$ and $(\op,\om)$ to respectively denote the conventional and $\Rm$ addition and multiplication respectively.
In $\mathbf{R}_{\min}$, the multiplicative identity is denoted by $e$ with $e=0 \in \mathbf{R}$ and the additive identity is denoted by $\mathbf{1}$ and is $+\infty$. The $\R_{\min}$ is an idempotent semiring, i.e., $a \op a=a, \forall a \in \Rm$.
We can define a semimodule $\mathcal{M}$ over this semiring, in a similar manner as vector spaces are defined over fields. In particular we are interested in the semimodule $\mathcal{M}=\Rm^n$. Given $u, v \in \Rm^n$, and $\lambda \in \Rm$, we define addition and scalar multiplication as follows:
\begin{definition}
\begin{align}
(u\op v)(i)&=\min\{u(i),v(i)\}=u(i)\op v(i), \forall i=1,2,\ldots,n.\nn\\
(u\om \lambda)(i)&=u(i)\om \lambda=u(i)+\lambda, \forall i=1,2,\ldots,n.\nn\\
\end{align}
\end{definition}
Subsemimodule of semimodules are similar to subspaces of a given vector space. 
The $\minp$ projection operator $\Pi_M$ is given by (\cite{akian,cohen1996kernels,Gaubert})
\begin{align}\label{smproj}
\Pi_M u=\min\{ v | {v \in \mathcal{V}}, v \geq u\}, \forall u \in \mathcal{M}.
\end{align}
In this paper, we consider semimodule $\mathcal{M}=\Rm^n$, and $k$-dimensional subsemimodule $\mathcal{V}$ which is a linear span of a given basis, i.e., $\mathcal{V}= Span\{\phi_i| \phi_i \in \Rm^n, i=1,\ldots,k\}=\{v|v=\Phi \om r\stackrel{\Delta}{=}\phi_1 \om r(1)\op \phi_2 \om r(2)\op \ldots \op \phi_k \om r(k), r(i) \in \Rm, i=1,\ldots,k\}$. We now show that $\Pi_M$ in \eqref{smproj} is closely related to the $\emph{Fenchel}$ transform, or the $\sup$-transform. (For a detailed discussion on projection onto subsemimodules, see \cite{akian}).

\section{Fenchel Dual and Projection on Subsemimodules}\label{fenchel}
In this section, we demonstrate the connections between the $\emph{Fenchel-Legendre}$ transform (FLT) and the $\minp$ projection defined in \eqref{smproj}. Given a function $f \colon \R^n \rightarrow \R$, its FLT is defined by $f^* \colon \R^n \ra \R$, with
\begin{align}\label{FT}
f^*(y)=\sup_{x \in \R^n}(y^\top x-f(x)), y \in \R^n.
\end{align}
If $f$ is convex, then it can be recovered as $f=f^{*^*}$, i.e.,
\begin{align}\label{FFT}
f(x)=f^{*^*}(x)=\sup_{y \in \R^n}(x^\top y-f^*(y)), x \in \R^n.
\end{align}
We can rewrite \eqref{FT} as below
\begin{align}\label{FTRR}
f^*(y)=\sup_{x \in \R^n}(f_y(x)-f(x)), y \in \R^n, \text{ where }f_y(x)=y^\top x.
\end{align}
Now instead of considering functions $f_y(x)$ indexed by $y \in \R^n$, we consider the sequence $\{\phi_j\}, j \in \mathcal{J}=\{1,2,\ldots,k\}, \phi_j \colon \R^n \ra \R$. Then \eqref{FTRR} can be modified as below:
\begin{align}\label{ST}
f^*(j)=\sup_{x \in \R^n}(\phi_j(x)-f(x)), j \in \mathcal{J}.
\end{align}
We call \eqref{ST}, the $\sup$-Transform or the $\max$-Transform. It is easy to check that $\phi_j(x)-f^*(j) <f(x), \forall x \in \R^n, j \in \mathcal{J}$. Since our index set in \eqref{ST} is finite (as opposed to $\R^n$ as in \eqref{FT} ), it is not necessary that the original function $f$ can be $\emph{reconstructed}$ from $f^*(j), j \in \mathcal{J}$. However, we can get an approximation $\tilde{f}$ as below:
\begin{align}\label{AST}
f(x)\approx\tilde{f}(x)=\sup_{j \in \mathcal{J}}(\phi_j(x)-f^*(j)).
\end{align}
In the light of \eqref{ST} and \eqref{AST}, the projection in \eqref{smproj} is nothing but the $\min$-Transform (as opposed to the $\max$-Transform \eqref{ST}). It is more clear if we rewrite \eqref{smproj} for the case when $\mathcal{V}=Span\{\phi_j|\phi_j \in \Rm^n, j=1,\ldots,k)$. Let $\Pi_M u=\Phi \om r^u$, then one can see that
\begin{align}
\Pi_M u&=\{\min\Phi\om r|\Phi \om r \geq u, r \in \Rm^k\}.\\
r^u(j)&= -\min_{i =1,2,\ldots, n} (\phi_j(i)-u(i)), \forall j=1,2,\ldots,k.\label{mpproj}
\end{align}
Note the similarity between $r^u(j)$ in \eqref{mpproj} and $f^*(j)$ in \eqref{ST}.
Then the approximation/projection of $u$ onto $\mathcal{V}$ is given by $\tilde{u}=\Pi_M u=\Phi \om r^u$ with
\begin{align}\label{mplfa}
\Pi_M u(i)&=\min_{j=1,\ldots,k}(\phi_j(i)+r^u(j))\nn\\
&=\phi_1(i) \om r^u(1)\op \ldots \op \phi_k(i) \om r^u(k).
\end{align}
Also, it is important to note that \eqref{ST} deals with projecting a function, while \eqref{smproj} deals with projecting the elements of $n$-dimensional semimodule. Nevertheless, the spirit of the projection is similar in both cases. Also, $\phi_j(i)+r^u_j -u(i)>0$, i.e., the $\min$-Transform approximates the given element $u$ by point-wise minimum of functions that upper bound $u$. We end this section with the following illustration.\\
\begin{example}
Let $f(x)=x^2$, and let $a=(a(j),j=1,\ldots,5)=(-0.8,-0.4,0,0.4,0.8)$, and $\phi_j(x)=2|x-a(j)|$. Then $\minp$ LFA of $f(x)$ via the $\min$-Transform using the $\{\phi_j(x),j=1,\ldots,5\}$ as the $\minp$ basis, is given in the Figure~\ref{minptrans}.
\end{example}
\begin{figure}\label{illust}
\begin{tikzpicture}[scale=1]
    \begin{axis}[
	xlabel=x,
        ylabel=f(x),
	]
    \addplot[smooth,black] plot file {f.dat};
    \addplot[smooth,black] plot file {fproj.dat};
    \addplot[dashed,black] plot file {f1.dat};
    \addplot[dashed,black] plot file {f2.dat};
    \addplot[dashed,black] plot file {f3.dat};
    \addplot[dashed,black] plot file {f4.dat};
    \addplot[dashed,black] plot file {f5.dat};
    \end{axis}
    \end{tikzpicture}
    \caption{$\minp$ LFA of $f(x)$}
    \label{minptrans}
\end{figure}
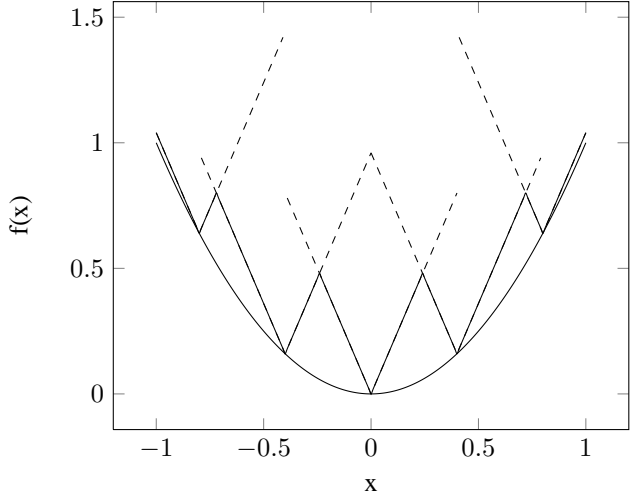

\section{$\minp$ Projected Bellman Equation}\label{mpprojected}
Given a $n\times k$ feature matrix $\Phi$, since we do not know $J^* \in \Rm^n$, $\Pi_M J^*$ cannot be obtained. Thus taking a cue from \eqref{pbe}, we have the approximate value function $\tilde{J}=\Phi\om r^*$ to obey the $\minp$ Projected Bellman Equation (MPPBE) given below:
\begin{align}\label{smprojbellprelim}
\Phi \om r^* =\Pi_M T\Phi \om r^*.
\end{align}
We can expand \eqref{smprojbellprelim} based on \eqref{smproj}, as follows:
\begin{align}\label{smprojbell}
\min\{ \Phi \om r | \Phi \om r  \geq T\Phi \om r, r\in \Rm^k \}.
\end{align}
The above \eqref{smprojbell} is similar to another class of ADP methods called Approximate Linear Program (ALP) in \eqref{alpb}. However, despite the apparent similarity in structure between the ALP \eqref{alpb} and the PBE in the $\minp$ basis \eqref{smprojbell}, the key difference is in the type of basis representation. We assume that \eqref{smprojbell} is feasible, until we establish that fact in Corollary\ref{feas}. We also make the following definition and assumption:
\begin{definition}\label{li}
We call the set of column vectors $\{\phi_i\},i=1,\ldots,k, \phi_i \in \mathbf{R}^n$ of the $n\times k$ matrix $\Phi$ to be linearly independent if $\Phi\om x=\Phi \om y$ $\iff$ $x=y$.
\end{definition}
\begin{assumption}\label{liassump}
The coulmns of the feature matrix $\Phi$ are independent.
\end{assumption}

\begin{lemma}\label{rnew}
Let $r_1, r_2 \in \Rm^k$ be such that $\Phi \om r_1 \geq T \Phi \om r_1$, and $\Phi \om r_2 \geq T \Phi \om r_2$ and let $r_{new}=r_1\op r_2$, then 
\begin{align}
\Phi \om r_{new} \geq T\Phi \om r_{new}\nn
\end{align}
\end{lemma}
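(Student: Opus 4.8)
The plan is to reduce the claim to the monotonicity of the Bellman operator $T$ (Lemma~\ref{monotone}), after first exploiting the fact that the map $r \mapsto \Phi \om r$ is $\minp$-linear and in particular distributes over $\op$. The crucial preliminary observation I would establish is the identity
\begin{align}
\Phi \om (r_1 \op r_2) = (\Phi \om r_1) \op (\Phi \om r_2).\nn
\end{align}
This follows by a componentwise expansion: for each $i$,
\begin{align}
(\Phi \om r_{new})(i) &= \min_{j}\bigl(\phi_j(i) + \min(r_1(j),r_2(j))\bigr)\nn\\
&= \min_{j}\min\bigl(\phi_j(i)+r_1(j),\ \phi_j(i)+r_2(j)\bigr)\nn\\
&= \min\Bigl(\min_{j}(\phi_j(i)+r_1(j)),\ \min_{j}(\phi_j(i)+r_2(j))\Bigr)\nn\\
&= \bigl((\Phi \om r_1)\op(\Phi \om r_2)\bigr)(i),\nn
\end{align}
where the second equality uses $a+\min(b,c)=\min(a+b,a+c)$ for reals, and the third uses that taking a minimum over $j$ commutes with the inner two-term minimum.

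With this identity in hand I would set $J_1 = \Phi \om r_1$, $J_2 = \Phi \om r_2$, and $J = \Phi \om r_{new} = J_1 \op J_2$, which is the pointwise minimum $\min(J_1,J_2)$. By hypothesis $J_1 \geq TJ_1$ and $J_2 \geq TJ_2$. Since $J \leq J_1$, monotonicity of $T$ gives $TJ \leq TJ_1 \leq J_1$; since $J \leq J_2$, the same argument gives $TJ \leq TJ_2 \leq J_2$. Taking the pointwise minimum of these two inequalities yields $TJ \leq \min(J_1,J_2) = J$, i.e.
\begin{align}
\Phi \om r_{new} = J \geq TJ = T\,\Phi \om r_{new},\nn
\end{align}
which is exactly the desired conclusion.

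There is no serious obstacle here; the content of the lemma is really that the feasible set $\{\,r \mid \Phi \om r \geq T\Phi \om r\,\}$ is closed under the $\op$ (pointwise-minimum) operation. The one step that does real work is the distributive identity above, which transports the combination $r_1 \op r_2$ on weight vectors into a pointwise minimum $J_1 \op J_2$ on value functions; once that is established, the remainder is the standard observation that the super-solution set of a monotone operator is closed under pointwise minima. I would emphasize that only monotonicity (Lemma~\ref{monotone}) is needed, not the contraction or shift properties, and that the argument extends verbatim to the $\op$ of finitely many such weight vectors, which is the form in which it will later be applied.
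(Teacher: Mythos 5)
Your proof is correct and follows essentially the same route as the paper: apply monotonicity of $T$ (Lemma~\ref{monotone}) together with the feasibility hypotheses to get $T(\Phi\om r_1 \op \Phi\om r_2) \leq \Phi\om r_i$ for $i=1,2$, take the pointwise minimum, and identify $\Phi\om(r_1\op r_2)$ with $(\Phi\om r_1)\op(\Phi\om r_2)$. The only difference is one of care rather than substance: you prove the distributivity identity explicitly, whereas the paper invokes it silently in passing from its inequality on $(\Phi\om r_1)\op(\Phi\om r_2)$ to the one on $\Phi\om(r_1\op r_2)$.
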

\begin{proof}
From Lemma~\ref{monotone}, it follows that
\begin{align}
\Phi \om r_1 &\geq T (\Phi \om r_1 \op  \Phi \om r_2),\label{add1}\\
\Phi \om r_2 &\geq T (\Phi \om r_1 \op  \Phi \om r_2).\label{add2}
\end{align}
From \eqref{add1} and \eqref{add2} we have
\begin{align}
(\Phi \om r_1 )\op (\Phi \om r_2)  &\geq T (\Phi \om r_1 \op  \Phi \om r_2),\label{add3}\\
\Phi \om (r_1 \op r_2) &\geq T \Phi \om (r_1 \op r_2),\label{add4}\\
\Phi \om (r_{new}) &\geq T \Phi \om (r_{new}).\label{add5}
\end{align}
\end{proof}
\subsection{Approximation Guarantees of the $\minp$ PBE}
The minimization in $\minp$ PBE in \eqref{smprojbell} is component-wise. It is desirable to identify an equivalent optimization problem wherein the objective function is not multivalued. To this end, we consider the following program:
\begin{align}\label{eq:mpalp}
 &\mbox{ }\min ~c^\top \Phi\om r \\
      &\quad s.t\quad \Phi\om r\geq T\Phi\om r,\nn\\
	& \text{ where } c^\top \Phi \om r =\sum_{i=1}^n c(i) (\Phi \om r)(i).\nonumber
\end{align}
\begin{lemma}\label{unique}
\eqref{eq:mpalp} has a unique solution.
\end{lemma}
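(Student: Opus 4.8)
The plan is to prove Lemma~\ref{unique} in two parts, existence and uniqueness of a minimizer of \eqref{eq:mpalp}, relying on the lattice structure of the feasible set established in Lemma~\ref{rnew} together with the linear-independence Assumption~\ref{liassump}.

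For existence, I would first observe that every feasible $r$ satisfies $\Phi\om r \geq T\Phi\om r$, so the second part of Lemma~\ref{monotone} gives $\Phi\om r \geq J^*$ componentwise. Since $c\geq 0$, the objective $c^\top\Phi\om r=\sum_i c(i)(\Phi\om r)(i)\geq c^\top J^*$ is bounded below on the feasible region, and since that region is nonempty (Corollary~\ref{feas}), the infimum $\nu$ is finite. To show $\nu$ is attained I would exploit that the feasible set $F=\{r:\Phi\om r\geq T\Phi\om r\}$ is closed under the componentwise minimum $\op$, which is precisely the content of Lemma~\ref{rnew}, together with the semimodule distributivity identity $\Phi\om(r_1\op r_2)=(\Phi\om r_1)\op(\Phi\om r_2)$ used in its proof. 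Applying $\op$ along a minimizing sequence produces feasible points that decrease in the componentwise order while staying bounded below by $J^*$; the resulting pointwise limit is the least element of $\{\Phi\om r:r\in F\}$ and realizes $\nu$.

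Uniqueness is the crux, and this is where Assumption~\ref{liassump} enters. Suppose $r_1,r_2$ are both optimal. By Lemma~\ref{rnew}, $r_{new}=r_1\op r_2\in F$, and by distributivity $\Phi\om r_{new}=(\Phi\om r_1)\op(\Phi\om r_2)\leq\Phi\om r_1$ componentwise. Hence, using $c\geq 0$,
\[
c^\top\Phi\om r_{new}\;\leq\;c^\top\Phi\om r_1\;=\;\nu.
\]
Feasibility of $r_{new}$ forces $c^\top\Phi\om r_{new}\geq\nu$, so equality holds and $\sum_i c(i)\big[(\Phi\om r_1)(i)-(\Phi\om r_{new})(i)\big]=0$. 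Every summand is nonnegative, so, provided the state-relevance weights satisfy $c(i)>0$ for all $i$, each summand vanishes and $\Phi\om r_1=\Phi\om r_{new}$; the symmetric argument yields $\Phi\om r_2=\Phi\om r_{new}$, whence $\Phi\om r_1=\Phi\om r_2$. By Assumption~\ref{liassump} and Definition~\ref{li}, linear independence of the columns of $\Phi$ then forces $r_1=r_2$.

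I expect two points to need care. The first is attainment of the infimum: the objective is concave (a $c$-weighted minimum of affine maps) and $F$ is unbounded, so one cannot invoke compactness directly; the honest route is to verify that the pointwise infimum $\hat J(i)=\inf_{r\in F}(\Phi\om r)(i)$ is itself of the form $\Phi\om\hat r$ with $\hat r\in F$, i.e. that $F$ is closed under \emph{arbitrary} (not merely pairwise) $\op$ and that this least element remains in $\mathcal{V}$ — the min-plus analogue of the least fixed point $\hat J=\Pi_M T\hat J$ of \eqref{smprojbellprelim}. The second is the role of $c$: the step that pins down \emph{all} coordinates of $\Phi\om r_1$ and $\Phi\om r_2$ requires $c(i)>0$ for every $i$, not merely $c(i)\geq 0$; if $c$ has zero entries the argument only forces agreement on $\mathrm{supp}(c)$ and uniqueness can genuinely fail, so I would state the lemma under strictly positive $c$, the standard assumption on state-relevance weights in the ALP setting.
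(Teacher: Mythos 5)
Your proposal is correct, and its uniqueness half follows essentially the same route as the paper's own proof: take two optima, form $r_{new}=r_1\op r_2$, invoke Lemma~\ref{rnew} for feasibility, and combine positivity of $c$ with Assumption~\ref{liassump} to conclude $r_1=r_2$. Where you differ is in completeness, and the differences are instructive. The paper's proof is a one-step contradiction: from the existence of $j$ with $r_{new}(j)<r_1^*(j)$ or $r_{new}(j)<r_2^*(j)$ it jumps, citing only Assumption~\ref{liassump}, to the strict inequality $c^\top\Phi\om r_{new}<c^\top\Phi\om r_1^*$; that jump silently uses the componentwise inequality $\Phi\om r_{new}\leq\Phi\om r_1^*$, the fact that the two images must differ (this is where Definition~\ref{li} actually enters), and strict positivity of every $c(i)$. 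Your rearrangement --- equality of objectives plus $c>0$ forces $\Phi\om r_1=\Phi\om r_{new}=\Phi\om r_2$, and only then linear independence forces $r_1=r_2$ --- makes each ingredient explicit, and your observation that the lemma needs $c(i)>0$ for all $i$ rather than $c\geq 0$ is a genuine hypothesis that the paper writes down only later, in Lemma~\ref{univalued}. Second, your existence half has no counterpart in the paper at all: the paper's proof tacitly assumes a minimizer exists, and elsewhere establishes only feasibility (Lemma~\ref{appmax}, Corollary~\ref{feas}), never attainment. Your sketch closes this gap and can be made rigorous along the lines you flag: feasibility gives $\Phi\om r\geq J^*$ by Lemma~\ref{monotone}, hence $r(j)\geq\max_i\,(J^*(i)-\phi_j(i))$ for each $j$, so the cumulative pairwise mins of a minimizing sequence (feasible by Lemma~\ref{rnew}) are decreasing, bounded below, and convergent; since $T$ and $r\mapsto\Phi\om r$ are continuous, the feasible set is closed, so the limit is feasible and attains the infimum. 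In short, what the paper buys with its version is brevity; what yours buys is an actual proof of existence and an explicit accounting of where $c>0$ and linear independence are used.
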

\begin{proof}
Let $r^*_1$ and $r_2^*$ be two distinct solutions of \eqref{eq:mpalp}. Then let $r_{new} = r_1^*\op r_2^*$, and $r_{new}$ is feasible from Lemma~\ref{rnew}. Since $r^*_1$ and $r^*_2$ are distinct, there exists a $j$ such that $r_{new}(j)< r^*_1(j)$ or $r_{new}(j)<r^*_2(j)$, and hence from Assumption~\ref{liassump}, $c^\top \Phi \om r_{new}< c^\top \Phi \om r_1^*=c^\top \Phi \om r_2^*$. This contradicts that fact that $r_1^*$ and $r_2^*$ are optimizers. Thus $r_1^*=r_2^*=r_{new}$.
\end{proof}
\begin{corollary}\label{greater}
Let $r_f$ be any feasible solution and $r^*$ the optimal solution for \eqref{eq:mpalp}. Then $r_f\geq r^*$ ($r_f(i)\geq r^*(i) , i=1,\ldots,k$).
\end{corollary}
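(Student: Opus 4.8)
The plan is to combine the closure property from Lemma~\ref{rnew} with the uniqueness from Lemma~\ref{unique}, exploiting the fact that both the constraint set and the objective interact nicely with the $\op$ (componentwise minimum) operation. The single new ingredient I would need is that the scalar objective $c^\top \Phi \om r$ is monotone nondecreasing in $r$.

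First I would establish that monotonicity. Since $(\Phi \om r)(i)=\min_{j}(\phi_j(i)+r(j))$, replacing $r$ by a componentwise smaller vector can only decrease each entry $(\Phi \om r)(i)$; because every weight satisfies $c(i)\geq 0$, the weighted sum $\sum_{i} c(i)(\Phi \om r)(i)$ decreases as well. Hence $r\leq r'$ componentwise implies $c^\top \Phi \om r \leq c^\top \Phi \om r'$.

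Next, given an arbitrary feasible $r_f$ and the optimal $r^*$, I would set $r_{new}=r_f\op r^*$. By Lemma~\ref{rnew}, $r_{new}$ is feasible for \eqref{eq:mpalp}. Since $\op$ is componentwise minimum, $r_{new}\leq r^*$, so by the monotonicity just proved $c^\top \Phi \om r_{new}\leq c^\top \Phi \om r^*$. Conversely, optimality of $r^*$ over the feasible set (which contains $r_{new}$) forces $c^\top \Phi \om r^*\leq c^\top \Phi \om r_{new}$. The two objective values therefore coincide, so $r_{new}$ is itself an optimizer. Invoking the uniqueness from Lemma~\ref{unique} gives $r_{new}=r^*$; writing this componentwise, $\min(r_f(i),r^*(i))=r^*(i)$ for every $i$, which is exactly $r^*(i)\leq r_f(i)$, i.e. $r_f\geq r^*$.

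I do not expect a genuine obstacle here: once the objective is shown to be monotone in $r$, the result is an almost immediate consequence of Lemmas~\ref{rnew} and~\ref{unique}. The only point requiring care is that the weights $c(i)$ are nonnegative, so that monotonicity of the min-plus product $\Phi \om r$ transfers to the scalar objective; this is precisely the standing assumption on $c$ inherited from the ALP-type formulation.
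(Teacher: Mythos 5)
Your proposal is correct and takes essentially the same approach as the paper: form $r_{new}=r_f\op r^*$, use Lemma~\ref{rnew} for feasibility, and conclude $r_{new}=r^*$ from the uniqueness in Lemma~\ref{unique}, which yields $r_f\geq r^*$ componentwise. The only difference is cosmetic --- you spell out the monotonicity of $c^\top \Phi \om r$ in $r$ (needed to see that the feasible point $r_{new}\leq r^*$ must itself be optimal), a step the paper leaves implicit in its one-line appeal to Lemma~\ref{unique}.
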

\begin{proof}
Let $r_1 \stackrel{\Delta}{=}r_f\op r^*$. From Lemma~\ref{rnew} we know that $r_1$ is feasible, and from Lemma~\ref{unique} that $r_1=r^*$.
\end{proof}
The following Lemma~\ref{univalued}, shows that \eqref{eq:mpalp} and \eqref{smprojbell} are equivalent.
\begin{lemma}\label{univalued}
For any $c \in \R^n$, $c>0$ (all components are positive), program \eqref{eq:mpalp} and the $\minp$ PBE in \eqref{smprojbell} are equivalent.
i.e., $r^* \in \Rm^k$  is a solution to \eqref{smprojbell}  $\iff$  $r^* \in \Rm^k$  is a solution to \eqref{eq:mpalp}.
\end{lemma}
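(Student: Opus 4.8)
The plan is to prove both implications from the componentwise monotonicity of the map $r \mapsto \Phi \om r$ together with Corollary~\ref{greater}. First I would record the monotonicity fact that the argument rests on: since $(\Phi \om r)(i)=\min_{j}(\phi_j(i)+r(j))$, increasing any component of $r$ can only increase each component of $\Phi \om r$, so $r_1 \leq r_2$ (componentwise in $\Rm^k$) implies $\Phi \om r_1 \leq \Phi \om r_2$ (componentwise in $\Rm^n$). This is immediate from the definition of $\om$ and is the only structural property of $\Phi \om \cdot$ that is needed. Note also the min-plus linearity $\Phi \om (r_1 \op r_2)=(\Phi \om r_1)\op(\Phi \om r_2)$, already used implicitly in Lemma~\ref{rnew}.

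For the forward direction, suppose $r^*$ solves the program \eqref{eq:mpalp}. It is feasible, so $\Phi \om r^* \geq T\Phi \om r^*$, and by Corollary~\ref{greater} every feasible $r_f$ satisfies $r_f \geq r^*$. By the monotonicity above, $\Phi \om r_f \geq \Phi \om r^*$ for every feasible $r_f$, so $\Phi \om r^*$ is the componentwise least element of the set $\{\Phi \om r \mid \Phi \om r \geq T\Phi \om r,\, r \in \Rm^k\}$. This is precisely the quantity singled out in \eqref{smprojbell}, hence $r^*$ solves the $\minp$ PBE. For the converse, suppose $r^*$ solves \eqref{smprojbell}, i.e. $\Phi \om r^*$ is feasible and $\Phi \om r^* \leq \Phi \om r_f$ componentwise for every feasible $r_f$. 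Since $c>0$, forming the inner product preserves the inequality, giving $c^\top \Phi \om r^* \leq c^\top \Phi \om r_f$ for all feasible $r_f$; thus $r^*$ minimizes the objective of \eqref{eq:mpalp} and is its unique solution (uniqueness by Lemma~\ref{unique}).

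Two points will need care, and I expect these to be the main obstacle rather than the two short optimality implications. First, \eqref{smprojbell} is only meaningful if the componentwise infimum over feasible points is actually attained by some element of the subsemimodule; this is guaranteed because Corollary~\ref{greater} exhibits a least feasible $r^*$, whose image $\Phi \om r^*$ is the attained minimum. Second, to justify that \eqref{smprojbell} genuinely expands the fixed-point equation \eqref{smprojbellprelim}, I would verify that this least super-solution $v^*=\Phi \om r^*$ satisfies $v^*=\Pi_M T v^*$: writing $w=\Pi_M T v^*$, feasibility of $v^*$ places it in the set defining $\Pi_M T v^*$ and gives $w \leq v^*$, while $w \geq T v^* \geq T w$ (the last step by Lemma~\ref{monotone}, since $w \leq v^*$) shows $w$ is itself a super-solution, forcing $w \geq v^*$ and hence $w=v^*$. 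The hypothesis $c>0$ enters only to convert the componentwise vector inequality into the scalar comparison of objective values, exactly as it was used in Lemma~\ref{unique}.
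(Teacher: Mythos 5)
Your proof is correct, and it rests on the same underlying ingredient as the paper's proof --- closure of the feasible set under the $\op$ operation (Lemma~\ref{rnew}) --- but you organize the argument differently. The paper proves both directions by contradiction: it takes the two purported solutions $r_1^*$ and $r_2^*$, forms the meet $r_{new}=r_1^*\op r_2^*$, which is feasible by Lemma~\ref{rnew}, and uses optimality in each problem together with $c>0$ to force $r_{new}$ to coincide with both. You instead route through Corollary~\ref{greater}: the solution of \eqref{eq:mpalp} is the componentwise least feasible point, and since $r\mapsto \Phi\om r$ is monotone, its image is the least element of $\{\Phi\om r \mid \Phi\om r \geq T\Phi\om r\}$, which is exactly what \eqref{smprojbell} singles out; the converse then reduces to the observation that $c>0$ converts the componentwise ordering into an ordering of objective values, with Lemma~\ref{unique} closing the argument. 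Since Corollary~\ref{greater} is itself proved by the same meet trick, the two routes are equivalent in substance, but yours is more direct and buys two things the paper leaves implicit: first, attainment of the componentwise minimum in \eqref{smprojbell} (the infimum is realized at the least feasible point, so the multivalued ``$\min$'' is well defined); second, the verification that this least super-solution actually satisfies the fixed-point form \eqref{smprojbellprelim}, i.e.\ $v^*=\Pi_M T v^*$ --- your argument there (feasibility of $v^*$ gives $\Pi_M T v^*\leq v^*$, while monotonicity of $T$ via Lemma~\ref{monotone} makes $w=\Pi_M T v^*$ itself a super-solution lying in $\mathcal{V}$, forcing $w\geq v^*$) is sound and fills a step the paper takes for granted when it expands \eqref{smprojbellprelim} into \eqref{smprojbell}.
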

\begin{proof}
Let $r^*_1$ and $r^*_2$ be the solutions to \eqref{smprojbell} and \eqref{eq:mpalp} respectively.\\
$\RA$\\
It clearly follows that $r^*_1$ is feasible for \eqref{eq:mpalp}. Now $r^*_2\leq r^*_1$. Suppose not, then define $r^*_{new}\stackrel{\Delta}{=}r^*_1\op r^*_2$. From Lemma~\ref{rnew} we know that $r_{new}$ is feasible. It then follows that for $c>0$, $c^\top \Phi \om r^*_{new}\leq c^\top \Phi \om r^*_{2}$. But since $c>0$ and $r^*_2$  is the solution to \eqref{eq:mpalp}, which implies $r^*_{new}=r^*_2$, hence $r^*_2\leq r^*_1$.\\
$\LA$\\
It is easy to check that $r^*_2$ is feasible for \eqref{smprojbell}. Then $r^*_1\leq r^*_2$. Suppose not, and let $r^*_{new}\stackrel{\Delta}{=}r^*_1\op r^*_2$. From Lemma~\ref{rnew} we know that $r_{new}$ is feasible. Then we know that $\Phi \om r^*_{new} \leq \Phi \om r^*_1$. But $r^*_1$ is the solution to \eqref{eq:mpalp}, so $r^*_1=r^*_{new}$, hence $r^*_1\leq r^*_2$.
\end{proof}
\begin{lemma}\label{eqprog}
$r^*$ is the optimal solution of \eqref{eq:mpalp} if and only if
 \begin{align}\label{eq:alpmin}
 r^*=&\mbox{ }\arg\min_r ||J^*- \Phi\om r||_\infty \\
      &\quad s.t\quad \Phi\om r\geq T\Phi\om r. \nonumber
\end{align}
\end{lemma}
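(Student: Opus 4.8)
The plan is to reduce the two-sided max-norm objective to a one-sided gap and then exploit the pointwise minimality of the solution of \eqref{eq:mpalp}. First I would observe that every feasible point lies above $J^*$: feasibility means $\Phi\om r\geq T\Phi\om r$, so the second assertion of Lemma~\ref{monotone} gives $\Phi\om r\geq J^*$. Consequently the absolute values in the max-norm can be dropped, and for every feasible $r$ one has $\|J^*-\Phi\om r\|_\infty=\max_{i}\big((\Phi\om r)(i)-J^*(i)\big)$, a nonnegative componentwise gap. This reduction is what lets us compare the linear objective $c^\top\Phi\om r$ of \eqref{eq:mpalp} with the max-norm objective of \eqref{eq:alpmin} on the same feasible set.

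For the forward implication, let $r^*$ be the optimum of \eqref{eq:mpalp}, which is unique by Lemma~\ref{unique}. By Corollary~\ref{greater} it is the pointwise-smallest feasible vector, so $r^*\leq r_f$ and hence $\Phi\om r^*\leq\Phi\om r_f$ componentwise for every feasible $r_f$, since $v\mapsto\Phi\om v$ is monotone in $\minp$. Combining this with the one-sided reduction gives $(\Phi\om r^*)(i)-J^*(i)\leq(\Phi\om r_f)(i)-J^*(i)$ for each $i$; taking the maximum over $i$ yields $\|J^*-\Phi\om r^*\|_\infty\leq\|J^*-\Phi\om r_f\|_\infty$. Thus $r^*$ minimizes the max-norm error over the feasible set, i.e. it solves \eqref{eq:alpmin}.

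For the converse I would start from a max-norm minimizer $r'$ of \eqref{eq:alpmin} and form $r''=r'\op r^*$, which is feasible by Lemma~\ref{rnew} and satisfies $\Phi\om r''\leq\Phi\om r'$; the one-sided reduction then forces $\|J^*-\Phi\om r''\|_\infty\leq\|J^*-\Phi\om r'\|_\infty$, so $r''$ is again a minimizer, while pointwise minimality of $r^*$ gives $r''\geq r^*$ and hence $r''=r^*$, i.e. $r'\geq r^*$. The genuinely delicate point, and the step I expect to be the main obstacle, is closing this into the stated equality $r'=r^*$: the max-norm is blind to coordinates whose gap is not maximal, so a priori one could raise $\Phi\om r'$ on non-binding coordinates without increasing the error, and the literal $\arg\min$ need not be a singleton. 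To handle this I would show that $r^*$ is the pointwise-least element of the minimizer set, invoking the linear-independence Assumption~\ref{liassump} together with the restriction that $\Phi\om r'$ must lie in the subsemimodule $\mathcal V$ to exclude such alternatives, so that the $\arg\min$ in \eqref{eq:alpmin} is read as this canonical least solution. Everything else is immediate once the bound $\Phi\om r\geq J^*$ is established; the whole weight of the argument rests on the monotonicity fact in Lemma~\ref{monotone} and the pointwise-minimality in Corollary~\ref{greater}.
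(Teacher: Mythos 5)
Your argument is essentially the paper's own: both rest on (i) feasibility plus the second assertion of Lemma~\ref{monotone} giving $\Phi\om r\geq J^*$, so the max-norm error is the one-sided gap $\max_i\big((\Phi\om r)(i)-J^*(i)\big)$; (ii) closure of the feasible set under $\op$ (Lemma~\ref{rnew}); and (iii) the optimum of \eqref{eq:mpalp} being the pointwise-least feasible point (Corollary~\ref{greater}, Lemma~\ref{unique}). Your forward direction is the paper's, stated more cleanly, and your converse derives exactly what the paper derives, namely that any minimizer $r'$ of \eqref{eq:alpmin} satisfies $r'\geq r^*$.

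The ``delicate point'' you flag is genuine, and you should know that the paper does not resolve it either: its converse stops at $r^*_1\leq r^*_2$, and its forward direction concludes $r^*_1=r^*_2$ only by tacitly treating the minimizer of \eqref{eq:alpmin} as unique. However, your hope that Assumption~\ref{liassump} rules out the other minimizers is misplaced. Take three states, each absorbing under every action, $g=(0,0,1)$, $\alpha=\tfrac12$, so $J^*=(0,0,2)^\top$; take $\phi_1=(0,+\infty,0)^\top$ and $\phi_2=(+\infty,0,+\infty)^\top$, which are independent in the sense of Definition~\ref{li}, with $\Phi\om r=(r(1),r(2),r(1))^\top$. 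The constraint $\Phi\om r\geq T\Phi\om r$ reads $r(1)\geq 2$, $r(2)\geq 0$, and on this set $||J^*-\Phi\om r||_\infty=\max(r(1),r(2))$, so every $(2,t)^\top$ with $0\leq t\leq 2$ attains the minimum of \eqref{eq:alpmin}, while \eqref{eq:mpalp} is solved only by $(2,0)^\top$. So the literal biconditional fails even under linear independence; the statement that is true --- and that your two steps already prove in full, with no further argument needed --- is your fallback reading: $r^*$ solves \eqref{eq:mpalp} if and only if $r^*$ is the pointwise-least minimizer of \eqref{eq:alpmin} (your forward step shows $r^*$ is a minimizer, and your converse step shows every minimizer dominates it).
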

\begin{proof}
$\Rightarrow$\\
Suppose not. Let $r^*_1$ be the solution to \eqref{eq:mpalp} and $r^*_2$ be the solution to \eqref{eq:alpmin}. Then $\hat{r}=r^*_1\op r^*_2$ is feasible for \eqref{eq:alpmin}. We also know from Lemma~\ref{monotone} that $\Phi \om r^*_2\geq  \Phi \om \hat{r} \geq J^*$,  but we know that $r^*_2$ is solution of \eqref{eq:alpmin}, which implies $r^*_1=r^*_2$.\\
$\Leftarrow$\\
Suppose not. Let $r^*_1$ be the solution to \eqref{eq:mpalp} and $r^*_2$ be the solution to \eqref{eq:alpmin}. Then $\hat{r}=r^*_1\op r^*_2$ is feasible for \eqref{eq:mpalp}. But we from Corollary~\ref{greater} know that $r^*_1\leq \hat{r}$ which is a contradiction. Thus $r^*_1$ and $r^*_2$ must be identical.
\end{proof}
\begin{lemma}\label{appmax}
There exists $\tilde{r} \in \Rm^k$ such that $\Phi \om \tilde{r}\geq T(\Phi \om \tilde{r})$ and $||J^*-\Phi \om \tilde{r}||_\infty \leq \frac{2}{1-\alpha} ||J^*-\Phi \om \bar{r}||_\infty$, where $||V||_\infty =\max_i |V(i)|$, $\bar{r}=\underset{r \in \Rm^k}{\arg\min}||J^*-\Phi \om r||_\infty$. 
\end{lemma}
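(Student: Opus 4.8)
The plan is to take the unconstrained best $\minp$ approximant $\Phi\om\bar r$ and shift it upward by a single scalar constant until it becomes feasible for the constraint $\Phi\om r\geq T\Phi\om r$, while keeping track of how much the shift inflates the $L_\infty$ error. Write $\bar J=\Phi\om\bar r$ and $\epsilon=||J^*-\bar J||_\infty$. The first observation I would record is that adding a constant $c$ to every component of $\bar r$ raises $\bar J$ uniformly: since $(\Phi\om r)(i)=\min_j(\phi_j(i)+r(j))$, we get $\Phi\om(\bar r+c\mathbf{1})=\bar J+c\mathbf{1}$. This is precisely the situation to which Lemma~\ref{shift} applies.

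Next I would bound the Bellman residual of $\bar J$. Using $J^*=TJ^*$ together with the $\max$-norm contraction property of $T$ (Lemma~\ref{maxnorm}), I decompose $T\bar J-\bar J=(T\bar J-TJ^*)+(J^*-\bar J)$ and take $\infty$-norms to obtain $||T\bar J-\bar J||_\infty\leq\alpha\epsilon+\epsilon=(1+\alpha)\epsilon$; in particular $(T\bar J-\bar J)(i)\leq(1+\alpha)\epsilon$ for every $i$. This is the quantity that controls how large the shift must be.

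Now I would set $\tilde r=\bar r+c\mathbf{1}$ with $c=\frac{(1+\alpha)\epsilon}{1-\alpha}$ and $\tilde J=\Phi\om\tilde r=\bar J+c\mathbf{1}$. By Lemma~\ref{shift}, $T\tilde J=T\bar J+\alpha c\mathbf{1}$, so the feasibility requirement $\tilde J\geq T\tilde J$ reduces to $T\bar J-\bar J\leq(1-\alpha)c\mathbf{1}$; since $(1-\alpha)c=(1+\alpha)\epsilon$ and the residual is bounded componentwise by $(1+\alpha)\epsilon$ from the previous step, feasibility holds, giving the first claim. For the error bound, since $|J^*(i)-\bar J(i)|\leq\epsilon$ for all $i$ and $c>0$, each component of $J^*-\tilde J=(J^*-\bar J)-c\mathbf{1}$ lies in $[-\epsilon-c,\epsilon-c]$, whence $||J^*-\tilde J||_\infty\leq\epsilon+c=\epsilon\bigl(1+\frac{1+\alpha}{1-\alpha}\bigr)=\frac{2}{1-\alpha}\epsilon$, which is exactly the asserted inequality.

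The argument is essentially mechanical once the shifting idea is in place; the only genuine content is the pair of facts that a uniform shift of the weight vector produces a uniform shift of the approximant and that $T$ commutes with such shifts up to the factor $\alpha$ (Lemma~\ref{shift}). The step I would watch most carefully is the feasibility computation: I need to keep the inequality directions straight and track the sign of $(\alpha-1)c$ so that the chosen $c$ is provably large enough, rather than merely heuristically so. Everything else follows from the contraction bound and a componentwise triangle inequality.
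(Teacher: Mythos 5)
Your proposal is correct and follows essentially the same route as the paper's own proof: bound the Bellman residual of $\Phi\om\bar r$ by $(1+\alpha)\epsilon$ via the contraction property (Lemma~\ref{maxnorm}), shift the weight vector by the scalar $\frac{(1+\alpha)\epsilon}{1-\alpha}$, verify feasibility using Lemma~\ref{shift}, and then bound the resulting error by $\epsilon+c=\frac{2}{1-\alpha}\epsilon$. Your componentwise interval argument for the final bound is a trivially equivalent variant of the paper's triangle-inequality step, and your write-up even avoids a small typo in the paper's displayed algebra (where $(1+\alpha)\epsilon\mathbf{1}$ appears misprinted as $(1-\alpha)\epsilon\mathbf{1}$).
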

\begin{proof}
 Let $\epsilon=||J^*-\Phi \om \bar{r}||_\infty$. Now due to the $\max$-norm contraction property of $T$ (Lemma~\ref{maxnorm}), we have $||TJ^*-T\Phi \om \bar{r}|| \leq \alpha \epsilon$. So we know that 
\begin{align}\label{eq:great}
\Phi \om \bar{r} \geq T \Phi \om \bar{r} -(1+\alpha)\epsilon \mathbf{1}. 
\end{align}
Now for any $p \in \mathbf{R}$, let $\tilde{r} =(\bar{r}(1)+p,\bar{r}(2)+p,\ldots, \bar{r}(k)+p)$, then 
\begin{align}\label{eq:tilder}
\begin{split}
\Phi \om \tilde{r}&= \Phi \om \bar{r} + p \mathbf{1}.\\
T \Phi \om \tilde{r}&= T \Phi \om \bar{r} +\alpha p \mathbf{1}.
\end{split}
\end{align}
For $p=\frac{1+\alpha}{1-\alpha} \epsilon$, from \eqref{eq:tilder} and \eqref{eq:great}, we have
\begin{align}
 \Phi \om \tilde{r} - T \Phi \om \tilde{r} &=\Phi \om \bar{r}- T \Phi \om \bar{r} +(1-\alpha)\frac{1+\alpha}{1-\alpha}\epsilon \mathbf{1}\nn\\
&=\Phi \om \bar{r}- T \Phi \om \bar{r} + (1-\alpha)\epsilon \mathbf{1}  \nn\\
&\geq \mathbf{0}. \nn\\
\end{align}
Now 
\begin{align}
 ||J^*-\Phi \om \tilde{r}||_\infty &\leq ||J^*-\Phi \om \bar{r}||_\infty + ||\Phi \om \bar{r} -\Phi \om \tilde{r}||_\infty\nn\\
				   &=(1+\frac{1+\alpha}{1-\alpha})||J^*-\Phi \om \bar{r}||_\infty\nn\\
				   &=\frac{2}{1-\alpha}||J^*-\Phi \om \bar{r}||_\infty.\nn
\end{align}
\end{proof}
\begin{corollary}\label{feas}
\eqref{eq:mpalp} is feasible.
\end{corollary}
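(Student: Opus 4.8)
The plan is to observe that Corollary~\ref{feas} is an immediate consequence of Lemma~\ref{appmax}, so essentially no new work is required. Feasibility of program~\eqref{eq:mpalp} means only that its constraint set is nonempty, i.e., that there exists at least one $r \in \Rm^k$ satisfying $\Phi \om r \geq T\Phi \om r$. Lemma~\ref{appmax} produces exactly such a point: it asserts the existence of a vector $\tilde{r} \in \Rm^k$ with $\Phi \om \tilde{r} \geq T(\Phi \om \tilde{r})$, which is precisely the constraint appearing in \eqref{eq:mpalp}.

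Accordingly, the proof I would write is a single sentence: take the $\tilde{r}$ constructed in the proof of Lemma~\ref{appmax} (obtained by shifting the best max-norm approximant $\bar{r}$ by the scalar $p = \tfrac{1+\alpha}{1-\alpha}\epsilon$ via Lemma~\ref{shift}), and note that it is a feasible point of \eqref{eq:mpalp}. Hence the feasible region is nonempty and the program is feasible. The accompanying error bound in Lemma~\ref{appmax} is not needed for feasibility per se, but it confirms that the feasible point is nontrivial and finite whenever $\bar{r}$ is, so no degeneracy (e.g. an all-$+\infty$ vector) can arise.

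There is no real obstacle here; the only thing to verify is that the existence claim in Lemma~\ref{appmax} is unconditional, which it is, since $\bar{r} = \arg\min_{r \in \Rm^k}\|J^* - \Phi \om r\|_\infty$ exists (the min-plus projection/approximation is always defined, giving a finite $\epsilon = \|J^* - \Phi \om \bar{r}\|_\infty$) and the shift construction then yields a genuine element of $\Rm^k$. I would simply cite Lemma~\ref{appmax} and conclude.
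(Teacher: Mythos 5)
Your proposal is correct and matches the paper's (implicit) proof exactly: the paper states Corollary~\ref{feas} immediately after Lemma~\ref{appmax} precisely because the $\tilde{r}$ constructed there, satisfying $\Phi \om \tilde{r} \geq T(\Phi \om \tilde{r})$, is a feasible point of \eqref{eq:mpalp}. Nothing more is needed.
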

We now state the approximation bound
\begin{theorem}
Let $r^*$ be the solution of \eqref{eq:mpalp}, and $\hat{r}=\arg\min_r||J^*-\Phi\om r||_\infty$. Then we have
\begin{align}
||J^*-\Phi\om r^*||_\infty\leq \frac{2}{1-\alpha}||J^*-\Phi \om \hat{r}||_\infty.\nn
\end{align}
\end{theorem}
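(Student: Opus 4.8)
The plan is to read off the bound by chaining the two immediately preceding lemmas; almost all of the genuine work has already been localized there, so the theorem itself reduces to a short optimality argument.

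First I would observe that the $\hat{r}$ appearing in the statement, namely $\arg\min_r ||J^*-\Phi\om r||_\infty$, is exactly the vector $\bar{r}$ used in Lemma~\ref{appmax}: both are the \emph{unconstrained} minimizers of the $L_\infty$ approximation error over all $r \in \Rm^k$. Hence the quantity $\frac{2}{1-\alpha}||J^*-\Phi\om\hat{r}||_\infty$ on the right-hand side is identical to the bound produced by Lemma~\ref{appmax}, and the only cosmetic point is to confirm that the two names denote the same object so that the right-hand sides match verbatim.

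Next I would apply Lemma~\ref{eqprog}, which characterizes the solution $r^*$ of \eqref{eq:mpalp} as the minimizer of $||J^*-\Phi\om r||_\infty$ over the feasible set $\{r : \Phi\om r \geq T\Phi\om r\}$. Thus for \emph{any} feasible $r$ one has $||J^*-\Phi\om r^*||_\infty \leq ||J^*-\Phi\om r||_\infty$. Lemma~\ref{appmax} then supplies a specific feasible witness $\tilde{r}$, satisfying $\Phi\om\tilde{r}\geq T\Phi\om\tilde{r}$, whose error already obeys $||J^*-\Phi\om\tilde{r}||_\infty \leq \frac{2}{1-\alpha}||J^*-\Phi\om\hat{r}||_\infty$. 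Instantiating the optimality inequality at $r=\tilde{r}$ and concatenating the two bounds yields the claim in one line.

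The step that carries the real content is Lemma~\ref{appmax}, already proved above, rather than the theorem. There the unconstrained minimizer $\bar{r}$ need not be feasible, so one shifts it by a scalar multiple of $\mathbf{1}$, exploiting the translation identity of Lemma~\ref{shift}, namely $T(J+p\mathbf{1})=TJ+\alpha p\mathbf{1}$, together with the $\max$-norm contraction of Lemma~\ref{maxnorm}, choosing $p=\frac{1+\alpha}{1-\alpha}\epsilon$ just large enough to force $\Phi\om\tilde{r}\geq T\Phi\om\tilde{r}$ while inflating the error only by the factor $\frac{2}{1-\alpha}$. Given that lemma and Lemma~\ref{eqprog}, the theorem is essentially immediate; I anticipate no substantive obstacle beyond keeping the constrained-versus-unconstrained minimizers straight.
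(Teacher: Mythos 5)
Your proposal is correct and follows exactly the paper's own argument: the paper likewise chains Lemma~\ref{eqprog} (that $r^*$ minimizes $\lVert J^*-\Phi\om r\rVert_\infty$ over the feasible set) with the feasible witness $\tilde{r}$ supplied by Lemma~\ref{appmax}, whose $\bar{r}$ is indeed the same unconstrained minimizer as $\hat{r}$. Your additional recap of how Lemma~\ref{appmax} is proved (the shift by $p=\frac{1+\alpha}{1-\alpha}\epsilon$ using Lemmas~\ref{shift} and~\ref{maxnorm}) matches the paper as well, so there is nothing to add.
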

\begin{proof}
We have shown in Lemma~\ref{appmax} that there exists $\tilde{r}$ feasible such that $||J^*-\Phi \om \tilde{r}||_\infty\leq \frac{2}{1-\alpha}||J^*-\Phi \om \hat{r}||_\infty$. Now we know from Lemma~\ref{eqprog} that $||J^*-\Phi \om r^*||_\infty\leq||J^*-\Phi \om \tilde{r}||_\infty$.
\end{proof}
Thus irrespective of the choice of $c$ the $L_\infty$-norm bound on the approximation error always holds, which is not the case of conventional ALP. Going forward we would want to further understand \eqref{eq:mpalp} and develop an algorithm to solve it. 
\begin{definition}
At a given $r \in \mathbf{R}^k$: 
\begin{enumerate}
\item We say that column vector $\phi_j$ $\emph{participates}$ in row $i$, if $(\Phi\om r)(i)=\phi_j(i)+r(j)$.
\item We call row $i$ to be $\emph{active}$ if $(\Phi \om r)(i)= (T\Phi \om r)(i)$
\end{enumerate}
\end{definition}

\begin{definition}\label{actp}
We call a point $\tilde{r}$ to be an $\emph{active}$-point if the following hold:
\begin{enumerate}
\item Each column of $\Phi$ $\emph{participates}$ in at least one row of $\Phi$. \label{acp1}
\item Atleast one of the rows is active, i.e., $\exists i$ such that $\Phi \om \tilde{r} (i) = (T \Phi \om \tilde{r})(i)$. \label{acp2}
\item Each column of $\Phi$ $\emph{participates}$ in one or more $\emph{active}$ rows. \label{acp3}
\item It is feasible i.e., $\Phi \om \tilde{r}\geq T(\Phi \om \tilde{r})$. \label{acp4}
\end{enumerate}
\end{definition}

\begin{lemma}\label{nopart}
Let $r \in \Rm^k$ be any point feasible point, i.e., $\Phi \om r\geq T(\Phi \om r)$ . Let $g \in \Rm^k$ be defined as $g(j)\stackrel{\Delta}{=}\min_i(\phi_j(i)+r(j)-T(\Phi \om r)(i))$ and $r_{new}$ be defined as $r_{new}\stackrel{\Delta}{=}r-g$. Then $r_{new}$ is feasible.
\end{lemma}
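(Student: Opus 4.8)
The plan is to read $g(j)$ as the (nonnegative) slack of column $j$ across all rows, and to show that subtracting this slack tightens the weights without breaking feasibility. I would compare the min-plus vector $\Phi\om r_{new}$ both against $T\Phi\om r$ and against $\Phi\om r$, and then close the argument with a single application of the monotonicity of $T$ (Lemma~\ref{monotone}). Throughout I would write $V=\Phi\om r$ and $W=TV$, so that $V(i)=\min_j(\phi_j(i)+r(j))$ and the hypothesis that $r$ is feasible reads simply $V\geq W$.

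First I would record that $g\geq 0$. For any fixed $j$ and $i$ we have $\phi_j(i)+r(j)\geq \min_{j'}(\phi_{j'}(i)+r(j'))=V(i)\geq W(i)$, where the last inequality is feasibility of $r$; minimizing the left-hand side over $i$ gives precisely $g(j)\geq 0$. Hence $r_{new}(j)=r(j)-g(j)\leq r(j)$ for every $j$, and therefore $\Phi\om r_{new}\leq \Phi\om r=V$ componentwise.

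Next I would establish the opposite-direction bound $\Phi\om r_{new}\geq W=TV$. This is immediate from the definition of $g$: since $g(j)=\min_i(\phi_j(i)+r(j)-W(i))$, for every $i$ and $j$ we have $\phi_j(i)+r(j)-g(j)\geq W(i)$, and taking the minimum over $j$ yields $(\Phi\om r_{new})(i)=\min_j(\phi_j(i)+r_{new}(j))\geq W(i)$. Note that this step uses only the definition of $g$ and no sign information.

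The final step assembles feasibility. Because $\Phi\om r_{new}\leq V$, monotonicity of $T$ gives $T\Phi\om r_{new}\leq TV=W$. Combining this with $\Phi\om r_{new}\geq W$ from the previous paragraph produces the chain $\Phi\om r_{new}\geq W\geq T\Phi\om r_{new}$, which is exactly the desired feasibility $\Phi\om r_{new}\geq T(\Phi\om r_{new})$. I do not expect a serious obstacle here: the only place requiring genuine care is the inequality chain $\phi_j(i)+r(j)\geq V(i)\geq W(i)$ used to get $g\geq 0$, and everything else follows formally from the definition of $g$ together with the one invocation of monotonicity.
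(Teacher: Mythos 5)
Your proof is correct and takes essentially the same route as the paper's: both rest on the definition of $g$, which yields $\Phi\om r_{new}\geq T\Phi\om r$, combined with a single application of the monotonicity of $T$ (via $r_{new}\leq r$) to conclude $T\Phi\om r\geq T\Phi\om r_{new}$. The differences are only presentational --- you sandwich the vectors globally and explicitly verify $g\geq 0$ (a point the paper leaves implicit when it asserts $r_{new}\leq r$), whereas the paper argues row by row through participating columns.
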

\begin{proof}
Since $r_{new}\leq r$, we have 
\begin{align}
T(\Phi \om r_{new})\leq T(\Phi \om r).\nn
\end{align}
Pick any column $j$, and let $i$ be any row in which column $j$ participates at $r_{new}$. Then we have
\begin{align}
(\Phi \om r_{new})(i)&= \phi_j(i)+r_{new}(j)\nn\\
&=\phi_j(i)+r(j)-g(j)\nn
\end{align}
Now
\begin{align}
&(\Phi \om r_{new})(i) -(T \Phi \om r_{new})(i)\nn\\
&= \phi_j(i)+r(j)-g(j)-(T \Phi \om r_{new})(i)\nn\\
&\geq \phi_j(i)+r(j)-g(j)-(T \Phi \om r)(i)\nn\\
&\geq 0\nn
\end{align} 
\end{proof}
\begin{corollary}\label{nopartcor}
$r_{new}=r-g'$ is feasible for any $g'\leq g$.
\end{corollary}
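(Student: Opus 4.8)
The plan is to obtain this as a direct strengthening of Lemma~\ref{nopart}, since the corollary merely replaces the specific slack vector $g$ by an arbitrary $g'\le g$. First I would isolate the two properties of $g$ that the proof of Lemma~\ref{nopart} actually used. The first is purely definitional: since $g(j)=\min_i(\phi_j(i)+r(j)-(T\Phi\om r)(i))$, we have $g(j)\le \phi_j(i)+r(j)-(T\Phi\om r)(i)$ for \emph{every} row $i$, not just the minimizing one. The second is that $g\ge \mathbf{0}$: feasibility of $r$ means $\min_{j'}(\phi_{j'}(i)+r(j'))=(\Phi\om r)(i)\ge (T\Phi\om r)(i)$, and a min dominating a quantity forces each term to dominate it, so $\phi_j(i)+r(j)-(T\Phi\om r)(i)\ge 0$ for all $i,j$; hence $r_{new}=r-g\le r$.

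Next, for $g'$ with $0\le g'\le g$ I set $r'_{new}=r-g'$. Because $g'\ge \mathbf{0}$ we still have $r'_{new}\le r$, so monotonicity of the $\minp$ combination in $r$ followed by monotonicity of $T$ (Lemma~\ref{monotone}) gives $T(\Phi\om r'_{new})\le T(\Phi\om r)$, exactly as in Lemma~\ref{nopart}. I would then repeat the participation argument verbatim: fix a column $j$ and a row $i$ in which $\phi_j$ participates at $r'_{new}$, so that $(\Phi\om r'_{new})(i)=\phi_j(i)+r(j)-g'(j)$, and estimate
\begin{align}
(\Phi\om r'_{new})(i)-(T\Phi\om r'_{new})(i) &\ge \phi_j(i)+r(j)-g'(j)-(T\Phi\om r)(i)\nn\\
&\ge \phi_j(i)+r(j)-g(j)-(T\Phi\om r)(i)\ge 0,\nn
\end{align}
where the first inequality uses $T(\Phi\om r'_{new})\le T(\Phi\om r)$, the second uses $g'(j)\le g(j)$, and the last uses the defining inequality for $g(j)$ recorded above. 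Since $(i,j)$ was an arbitrary participating pair and every row is covered by some participating column, this yields $\Phi\om r'_{new}\ge T(\Phi\om r'_{new})$, i.e. $r'_{new}$ is feasible.

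I do not expect a serious obstacle, as the argument is essentially a transcription of Lemma~\ref{nopart} with $g$ weakened to $g'$. The only point needing care is the admissible range of $g'$: the monotonicity step $T(\Phi\om r'_{new})\le T(\Phi\om r)$ relies on $r'_{new}\le r$, i.e. on $g'\ge \mathbf{0}$, which together with $g\ge\mathbf{0}$ is precisely the regime in which the corollary is invoked (a partial, nonnegative step along the slack $g$). I would therefore present the argument for $0\le g'\le g$; permitting negative components of $g'$ would push $r'_{new}$ above $r$, destroying the clean upper bound on $T(\Phi\om r'_{new})$, and feasibility would no longer be guaranteed.
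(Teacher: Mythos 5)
Your proof is correct and is essentially the paper's own argument: the corollary is stated without a separate proof precisely because the proof of Lemma~\ref{nopart} is meant to go through verbatim with $g'$ in place of $g$, and your transcription isolates exactly the two properties that proof uses, namely $r-g'\le r$ (for the monotonicity step $T(\Phi\om(r-g'))\le T(\Phi\om r)$) and $g'(j)\le g(j)\le \phi_j(i)+r(j)-(T\Phi\om r)(i)$ for every row $i$ (for the final estimate).

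Your closing caveat deserves emphasis, because it is not merely a matter of taste: read literally (``any $g'\le g$'', negative components allowed), the statement is false, so the restriction $0\le g'\le g$ that you impose is necessary and not just convenient for the proof. Concretely, take two states with zero rewards, deterministic transitions $1\to 2$ and $2\to 2$, discount $\alpha\in(0,1)$, columns $\phi_1=(0,1000)^\top$, $\phi_2=(1000,0)^\top$, and $r=(0,0)$. Then $\Phi\om r=(0,0)=T\Phi\om r$, so $r$ is feasible and $g=(0,0)$. Choosing $g'=(0,-1)\le g$ gives $r_{new}=(0,1)$, hence $\Phi\om r_{new}=(0,1)$ while $(T\Phi\om r_{new})(1)=\alpha>0=(\Phi\om r_{new})(1)$, so $r_{new}$ is infeasible. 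The failure mechanism is the one you identified: raising one coordinate of $r$ can raise $T\Phi\om r_{new}$ in a row whose own value did not move. Note finally that every invocation of the corollary in the paper (the three cases of the active-point characterization theorem) uses $g'=d\,e_j$ with $0\le d\le g(j)$, which lies in the regime $0\le g'\le g$ you cover, so nothing downstream is affected by the restriction.
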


\begin{lemma}\label{neg}
Let $\tilde{r}$ be an active point and $v>0$ be any positive vector in $\mathbf{R}^k$. Then any $r_{new}$ such that $r_{new}\stackrel{\Delta}{=}\tilde{r}-v$ is not feasible.
\end{lemma}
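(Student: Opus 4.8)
The plan is to exhibit a single row at which $r_{new} = \tilde{r} - v$ violates feasibility, i.e., a row $i_0$ with $(\Phi \om r_{new})(i_0) < (T\Phi \om r_{new})(i_0)$. The guiding intuition is that subtracting $v$ pushes the left-hand side $\Phi \om r$ down by the local weight $v(j)$ of whichever column participates, whereas the monotone, $\alpha$-contracting Bellman operator can pull the right-hand side $T\Phi \om r$ down by at most a factor $\alpha$ of the \emph{global} drop. Steering attention to the row governed by the column with the largest $v$-weight then makes the left-hand side fall strictly faster than the right-hand side.

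First I would set $M = \max_{j} v(j)$, which is strictly positive since $v>0$, and fix a column $j_0$ attaining it, so $v(j_0) = M$. By clause~\ref{acp3} of Definition~\ref{actp}, column $j_0$ participates in some active row $i_0$; that is, $(\Phi \om \tilde{r})(i_0) = \phi_{j_0}(i_0) + \tilde{r}(j_0)$ and simultaneously $(\Phi \om \tilde{r})(i_0) = (T\Phi \om \tilde{r})(i_0)$. This is the only place the active-point hypothesis enters, and it is the crux of the argument.

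Next I would bound the two sides at $i_0$ separately. For the left-hand side, evaluating the $\min$ defining $(\Phi \om r_{new})(i_0)$ at the single index $j_0$ gives $(\Phi \om r_{new})(i_0) \le \phi_{j_0}(i_0) + \tilde{r}(j_0) - v(j_0) = (\Phi \om \tilde{r})(i_0) - M$. For the right-hand side, since $v(j) \le M$ for every $j$ we have $r_{new} \ge \tilde{r} - M\mathbf{1}$ componentwise, hence $\Phi \om r_{new} \ge \Phi \om \tilde{r} - M\mathbf{1}$ (as $\Phi\om\cdot$ is monotone in $r$); applying monotonicity of $T$ (Lemma~\ref{monotone}) and then the shift identity $T(J - M\mathbf{1}) = TJ - \alpha M\mathbf{1}$ (Lemma~\ref{shift}) yields $(T\Phi \om r_{new})(i_0) \ge (T\Phi \om \tilde{r})(i_0) - \alpha M = (\Phi \om \tilde{r})(i_0) - \alpha M$, the equality using that $i_0$ is active.

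Finally I would combine the two estimates: $(\Phi \om r_{new})(i_0) \le (\Phi \om \tilde{r})(i_0) - M < (\Phi \om \tilde{r})(i_0) - \alpha M \le (T\Phi \om r_{new})(i_0)$, the strict inequality holding because $0 < \alpha < 1$ and $M>0$ force $M > \alpha M$. This violates $\Phi \om r_{new} \ge T\Phi \om r_{new}$ at row $i_0$, so $r_{new}$ is infeasible. The main obstacle to anticipate is the tempting but fatal shortcut of bounding the left-hand drop uniformly by $\min_j v(j)$ and the right-hand drop by $\alpha \max_j v(j)$: that crude pairing only closes when $v$ is constant and breaks otherwise. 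The resolution is precisely to align the chosen row with the \emph{maximum}-weight column through clause~\ref{acp3}, so that the single quantity $M$ simultaneously lower-bounds the left-hand drop and upper-bounds the right-hand drop.
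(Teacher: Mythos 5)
Your proof is correct and follows essentially the same route as the paper's: both pick the column maximizing $v$, use clause~\ref{acp3} of Definition~\ref{actp} to locate an active row in which it participates, and combine monotonicity of $T$ with the shift property (Lemma~\ref{shift}) to produce a strict $(1-\alpha)v(j_0)$ violation of feasibility at that row. Your one minor refinement --- bounding $(\Phi\om r_{new})(i_0)$ above by the single term at index $j_0$ instead of arguing, as the paper does, that this column still participates at $r_{new}$ --- streamlines that step without changing the argument.
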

\begin{proof}
Let $j\stackrel{\Delta}{=}\arg\max_{p=1}^k v(p)$. By part \ref{acp3} of Definition~\ref{actp} column $j$ should participate in any one or more $\emph{active}$ rows. So w.l.o.g, we assume that column $j$ participates in the $\emph{active}$ row $i$ at $\tilde{r}$. Then it follows from definition of $j$ that column $j$ participates in row $i$ at $r_{new}$. Now 
\begin{align}
&(\Phi \om r_{new})(i)-(T\Phi \om r_{new})(i)\nn\\
&\leq (\Phi \om \tilde{r})(i)-(T\Phi \om r_{new})(i)- v(j), \label{l1}\\
					&\leq (\Phi \om \tilde{r})(i)-(T\Phi \om \tilde{r})(i)- v(j)+ \alpha v(j),  \label{l2}\\
					&\leq 0.
\end{align}
\eqref{l2} follows from \eqref{l1} from Lemma~\ref{shift}, and due to the fact that $v\leq v(j) \mathbf{1}$, where $\mathbf{1} \in \R^k$ is vector with all entries equal to $1$.
\end{proof}

The following Lemma characterizes the optimal solution of \eqref{eq:mpalp}
\begin{theorem}
$r^*$ is an optimal solution of \eqref{eq:mpalp} $\emph{iff}$ $r^*$ is an $\emph{active}$-point.
\end{theorem}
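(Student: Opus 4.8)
The plan is to prove both implications, drawing the two directions from different parts of the machinery already set up. For the direction ``optimal $\RA$ active-point'' I would exploit the slack-reduction map $g$ of Lemma~\ref{nopart}, and for ``active-point $\RA$ optimal'' I would combine the monotonicity of the feasible set (Corollary~\ref{greater}) with the infeasibility-under-shrinking statement of Lemma~\ref{neg}.

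First I would suppose $r^*$ solves \eqref{eq:mpalp}. Condition~\ref{acp4} of Definition~\ref{actp} (feasibility) is then immediate. I define $g$ as in Lemma~\ref{nopart}, i.e. $g(j)=\min_i\big(\phi_j(i)+r^*(j)-(T\Phi\om r^*)(i)\big)$. Feasibility gives $\phi_j(i)+r^*(j)\ge(\Phi\om r^*)(i)\ge(T\Phi\om r^*)(i)$ for every $i,j$, so $g\ge 0$. By Lemma~\ref{nopart}, $r^*-g$ is feasible, and $r^*-g\le r^*$. If some $g(j)>0$ then $r^*-g\neq r^*$; by Assumption~\ref{liassump} this forces $\Phi\om(r^*-g)\neq\Phi\om r^*$, hence (being $\le$ componentwise) strictly smaller in some row, so $c^\top\Phi\om(r^*-g)<c^\top\Phi\om r^*$ because $c>0$, contradicting optimality. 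Thus $g\equiv 0$. For each $j$ this means there is a row $i_j$ with $\phi_j(i_j)+r^*(j)=(T\Phi\om r^*)(i_j)$; chaining with $\phi_j(i_j)+r^*(j)\ge(\Phi\om r^*)(i_j)\ge(T\Phi\om r^*)(i_j)$ collapses all three quantities to equality, so $\phi_j$ participates in row $i_j$ and that row is active. Since this holds for every $j$, conditions~\ref{acp1}, \ref{acp2} and \ref{acp3} all follow, and $r^*$ is an active-point.

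Conversely, I would suppose $r^*$ is an active-point; in particular it is feasible. Let $\bar r$ be the (unique, by Lemma~\ref{unique}) optimal solution. By Corollary~\ref{greater}, $\bar r\le r^*$, so $v:=r^*-\bar r\ge 0$. If $v\neq 0$, then $r^*-v=\bar r$ would be feasible (being optimal), whereas Lemma~\ref{neg} asserts that shrinking an active-point by a positive vector destroys feasibility; this contradiction yields $v=0$, i.e. $r^*=\bar r$ is optimal.

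The main obstacle I anticipate is reconciling the reverse direction with the exact hypothesis of Lemma~\ref{neg}, which is stated for a strictly positive $v$, whereas the $v=r^*-\bar r$ produced by Corollary~\ref{greater} is only nonnegative and nonzero. I would close this gap by observing that the proof of Lemma~\ref{neg} never uses strict positivity of all coordinates: it selects $j=\arg\max_p v(p)$, uses only $v(j)>0$ (guaranteed once $v\ge0$ and $v\neq0$) together with condition~\ref{acp3} (column $j$ participates in some active row) and the estimate $v\le v(j)\mathbf 1$. Hence Lemma~\ref{neg} applies verbatim to nonnegative nonzero $v$, and the argument goes through. A secondary point of care is the unpacking of $g\equiv0$ in the forward direction into the participation and active-row conditions, which is precisely the equality-collapse step above and should be written out explicitly rather than asserted.
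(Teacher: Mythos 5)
Your proof is correct, and while your reverse direction coincides with the paper's (Corollary~\ref{greater} plus Lemma~\ref{neg}), your forward direction takes a genuinely different and more economical route. The paper verifies conditions \ref{acp1}--\ref{acp3} of Definition~\ref{actp} one at a time, each by a separate contradiction built from a single-coordinate perturbation $r_{new}=r^*-de_j$ (with $d$ chosen differently in each case, feasibility checked via Corollary~\ref{nopartcor} or by a direct two-case estimate over active and inactive rows) and then uniqueness (Lemma~\ref{unique}). You instead apply the full slack map $g$ of Lemma~\ref{nopart} once at $r^*$, show optimality forces $g\equiv 0$ (via Assumption~\ref{liassump} and $c>0$, exactly the mechanism inside the proof of Lemma~\ref{unique}), and unpack $g(j)=0$ through the equality collapse $\phi_j(i_j)+r^*(j)=(\Phi\om r^*)(i_j)=(T\Phi\om r^*)(i_j)$, which delivers all three conditions at once, since participation of every column in an active row subsumes conditions \ref{acp1} and \ref{acp2}. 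What your version buys is a unified argument plus a structural dividend: it exhibits the optimizer as precisely a zero of the map $g_n$ computed in line~\ref{rowfind} of Algorithm~\ref{algo1}, so the characterization theorem and the algorithm's termination criterion become literally the same condition; the paper's case-by-case treatment is longer but makes visible which perturbation defeats which failed condition. Your repair of Lemma~\ref{neg} in the reverse direction is also warranted rather than pedantic: the vector $v=r^*-\bar r$ supplied by Corollary~\ref{greater} is in general only nonnegative and nonzero, yet the paper invokes Lemma~\ref{neg} (stated for strictly positive $v$) without comment, and its text even misstates the inequality from Corollary~\ref{greater} as $\tilde r\le r^*$. Your observation that the lemma's proof uses only $v(j)=\max_p v(p)>0$, condition~\ref{acp3}, and the bound $v\le v(j)\mathbf{1}$ (yielding strict slack $-(1-\alpha)v(j)<0$ at the chosen active row, hence genuine infeasibility) closes a gap the paper leaves open.
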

\begin{proof}\\
$\Rightarrow$\\
Let us assume on the contrary that part \ref{acp1} of Definition~\ref{actp} is not true for $r^*$. Then $\exists$ some $j$ such that $\phi_j$ does not $\emph{participate}$ in any of the rows. Define $d\stackrel{\Delta}{=}\min_i [\phi_j(i)+r^*(j)-(\Phi \om r^*)(i)]$. Now define $r_{new}\stackrel{\Delta}{=}r^*-de_j$ (where $e_j$ is the standard basis with $1$ in the $j^{th}$ coordinate and all other entries set to $0$). From Corollary~\ref{nopartcor} it follows that $r_{new}$ is feasible for \eqref{eq:mpalp} and $r_{new}\leq r^*$, which is a contradiction by Lemma~\ref{unique}. So part \ref{acp1} of Definition~\ref{actp} has to be true for $r^*$.\\
Suppose part \ref{acp2} of Definition~\ref{actp} is not true for $r^*$.
Define $V=\Phi \om r^*-T\Phi \om r^*$. Since $r^*$ is feasible and none of the rows are active we know that $V>0$. Also, none of the columns participate in any of the active rows (since no row is active). Pick any column $j$, and let $d=\min_i (\phi_j(i)+r^*(j)-(T \Phi \om r^*)(i))$, and $r_{new}=r^*-de_j$. Then from Corollary~\ref{nopartcor}, $r_{new}$ is also feasible, but $r_{new} \leq r^*$, which is not possible by Lemma~\ref{unique}. So part \ref{acp2} of Definition~\ref{actp} has to be true for $r^*$.\\
Finally let us assume on the contrary that part \ref{acp3} of Definition~\ref{actp} is not true for $r^*$. Then $\exists$ some $j$ such that $\phi_j$ does not $\emph{participate}$ in any of the $\emph{active}$ rows. Let $\mathcal{I}$ denote the set of $\emph{active}$ rows, and define $d_1\stackrel{\Delta}{=}\min_{i \notin \mathcal{I}}[\phi_j(i)+r(j)- (T\Phi \om r^*)(i)]$, $d_2\stackrel{\Delta}{=}\min_{i \in \mathcal{I}}[\phi_j(i)+r(j)- (T\Phi \om r^*)(i)] $, and $d\stackrel{\Delta}{=}\min\{d_1,d_2\}$. Define $r_{new}\stackrel{\Delta}{=}r^*-de_j$. Now we have 
\begin{enumerate}
\item $i \notin \mathcal{I}$
\begin{align}
&\Phi \om r_{new}(i) - (T\Phi \om r_{new})(i) \nn\\
&\geq (\Phi \om r^*)(i) - (T\Phi \om r_{new})(i) -d\nn\\
&\geq (\Phi \om r^*)(i) - (T\Phi \om r^*)(i) -d\nn\\
&\geq 0\nn
\end{align}
\item $i \in \mathcal{I}$
\begin{align}
&\Phi \om r_{new}(i) - (T\Phi \om r_{new})(i)\nn\\
&= (\Phi \om r^*)(i) - (T\Phi \om r_{new})(i)\nn\\
&\geq (\Phi \om r^*)(i) - (T\Phi \om r^*)(i)\nn\\
&\geq 0\nn
\end{align}
\end{enumerate}
Thus $r_{new}$ is a feasible solution for \eqref{eq:mpalp} and $r_{new}\leq r^*$, which is a contradiction from Lemma~\ref{unique}. So part \ref{acp3} of Definition~\ref{actp} has to be true for $r^*$.
It is easy to check that part~\ref{acp4} holds trivially.\\
$\Leftarrow$\\
Let $\tilde{r}$ be an $\emph{active}$-point. Let the optimal point $r^*$ be different from $\tilde{r}$. We know from part~\ref{acp4} of Definition~\ref{actp} that $\tilde{r}$ is feasible for \eqref{eq:mpalp}. We know from that Corollary\ref{greater} that $\tilde{r}\leq r^*$, which is a contradiction according to Lemma~\ref{neg}. So $\tilde{r}=r^*$.
\end{proof}
\subsection{Finding a feasible point}
We now split the program \eqref{eq:mpalp} in $k$-variables into $k$ programs in one variable each. We call these programs as Sub $\minp$ Projected Bellman Equation (SMPPBE). The $i^{th}$ SMPPBE is given by
\begin{align}\label{eq:salpfull}
 &\mbox{ }\min ~c^\top \phi_i \om r(i)  \\
      &\quad s.t\quad \phi_i\om r(i)\geq T\phi_i \om r(i). \nonumber
\end{align}
The objective in \eqref{eq:salpfull} can be simplified further.
\begin{align}\label{eq:objred}
c^\top \phi_i\om r(i)&=\sum^k_{j=1} c(i) (\phi_i(j)+r(i)) \nonumber \\
&=\sum^k_{j=1} c(i) \phi_i(j)+\sum^k_{j=1} c(i)r(i)\nn\\
&=\sum^k_{j=1} c(i) \phi_i(j)+r(i) \sum^k_{j=1} c(i)
\end{align}
The first term on the right hand side of \eqref{eq:objred} is a constant and since $\sum^k_{j=1}c(i)>0$, the $i^{th}$ SMPPBE can be equivalently written as below:
\begin{align}\label{eq:salp}
 &\mbox{ }\min r(i)  \\
      &\quad s.t\quad \phi_i \om r(i)\geq T\phi_i \om r(i). \nonumber
\end{align}
Let $r_s^*(i)$ be the optimal value of the $i^{th}$ SMPPBE. We define $r_s^* \in \Rm^k$ as $r_s^*=(r_s^*(1),r_s^*(2),\ldots,r_s^*(k))$.
\begin{theorem}
 $r^*_s$ is feasible for \eqref{eq:mpalp}.
\end{theorem}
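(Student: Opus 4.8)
The plan is to combine two ingredients: first, that the min-plus combination $\Phi \om r_s^*$ is by construction a componentwise minimum of the $k$ single-column approximants $\phi_i \om r_s^*(i)$, so it lies below each of them; and second, the monotonicity of the Bellman operator $T$ from Lemma~\ref{monotone}. The feasibility of each $i$-th SMPPBE \eqref{eq:salp} supplies $\phi_i \om r_s^*(i) \geq T(\phi_i \om r_s^*(i))$ for every $i$, and the goal is to lift these $k$ single-column feasibility statements to the joint statement $\Phi \om r_s^* \geq T(\Phi \om r_s^*)$, which is precisely the constraint of \eqref{eq:mpalp}.

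First I would record that, since $\Phi \om r_s^* = \phi_1 \om r_s^*(1) \op \cdots \op \phi_k \om r_s^*(k)$ is a pointwise minimum, we have $\Phi \om r_s^* \leq \phi_i \om r_s^*(i)$ (in the usual order on $\R^n$) for each fixed $i$. Applying the monotonicity of $T$ to this inequality gives $T(\Phi \om r_s^*) \leq T(\phi_i \om r_s^*(i))$ for each $i$.

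Next I would chain this with the per-column feasibility $T(\phi_i \om r_s^*(i)) \leq \phi_i \om r_s^*(i)$, obtaining $T(\Phi \om r_s^*) \leq \phi_i \om r_s^*(i)$ for every $i = 1,\ldots,k$. Since the left-hand side now lies below every member of the family $\{\phi_i \om r_s^*(i)\}_{i=1}^k$, it lies below their pointwise minimum, i.e. $T(\Phi \om r_s^*) \leq \phi_1 \om r_s^*(1) \op \cdots \op \phi_k \om r_s^*(k) = \Phi \om r_s^*$. This is exactly the feasibility constraint of \eqref{eq:mpalp}, which finishes the argument.

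The only subtle step, and the one I would flag as the crux, is the very first inequality: one must use that the idempotent-semimodule sum $\op$ is a minimum, so that the aggregate $\Phi \om r_s^*$ is dominated by each individual column contribution $\phi_i \om r_s^*(i)$. Everything after that is a direct application of the monotonicity of $T$ together with the definition of $r_s^*$; no contraction, uniqueness, or active-point machinery is needed here. I therefore expect the write-up to be quite short, the main care being to keep the two orderings consistent throughout — the usual order on $\R^n$ appearing in the ``$\geq$'' constraints and the $\min$ operation $\op$ used to form $\Phi \om r_s^*$.
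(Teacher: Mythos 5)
Your proposal is correct and follows essentially the same route as the paper's own proof: both use the fact that $\Phi \om r_s^*$ is the pointwise minimum of the column terms $\phi_i \om r_s^*(i)$, apply the monotonicity of $T$ (Lemma~\ref{monotone}) together with the per-column feasibility from each SMPPBE to get $T(\Phi \om r_s^*) \leq \phi_i \om r_s^*(i)$ for every $i$, and then take the minimum over $i$. There is no gap; the argument is complete as written.
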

\begin{proof}
Since $r_s^*(i)$ is the solution for the $i^{th}$ SMPPBE, we know that
 \begin{align}
  \phi_i\om r_s^*(i) &\geq T \phi_i\om r_s^*(i)\label{l1}.\\
 \text{ Hence,} \nn\\
  \phi_i\om r_s^*(i) &\geq T \min\{\phi_1+r_s^*(1),\ldots,\phi_k + r_s^*(k) \}\label{l2},\\
 \text{ or,}\nn\\
  \phi_i\om r_s^*(i) &\geq T \Phi\om r_s^*, \label{l3}
 \end{align}
where \eqref{l2} follows from \eqref{l1} due to the monotonicity property of $T$, and \eqref{l3} follows from \eqref{l2} due to the definition of $\Phi \om r$.
Now since \eqref{l3} is true for every $i$, we have
 \begin{align}
  \min\{\phi_1+r_s^*(1),\ldots,\phi_k + r_s^*(k) \} &\geq T \Phi \om r_s^*, \nn\\
	\text{ or }\nn\\
   \Phi\om r_s^*  &\geq T \Phi \om r_s^*.\nn
  \end{align}
\end{proof}

\section{$\minp$ Approximate Dynamic Programming Algorithm (MPADP)}\label{mpadpalgo}
\begin{algorithm}
\caption{$\minp$ Approximate Dynamic Programming Algorithm}
\begin{algorithmic}[1]
\STATE Start with any feasible point $r_0$, a small number $\epsilon >0$ a small number and $n=0$.
\WHILE{$||g_n||_\infty >\epsilon$} 
\STATE \label{rowfind} Compute the gradient $g_n(j)=\min_{s \in S}(\phi_j(s)+r_n(j)-(T \Phi \om r_n)(s))$.
\STATE $r_{n+1}=r_{n}-g_n$.
\STATE $n=n+1$.
\ENDWHILE
\RETURN $r_{opt}=r_n$, and approximate value function $\tilde{J}=\Phi \om r_{opt}$.
\end{algorithmic}
\label{algo1}
\end{algorithm}
From Lemma~\ref{nopart}, we know that $r_n$ in Algorithm~\ref{algo1} is feasible for all $n$. 
\begin{theorem}
The Algorithm~\ref{algo1} converges in a finite number of iterations for $\epsilon >0$.
\end{theorem}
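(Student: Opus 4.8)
The plan is to show that the iterates $\{r_n\}$ generated by Algorithm~\ref{algo1} form a coordinate-wise non-increasing sequence that is bounded below, and then to exhibit a scalar potential that strictly decreases by at least $\epsilon$ on every iteration of the while loop; this forces the loop to terminate after finitely many steps and even yields an explicit iteration bound.

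First I would record three structural facts about the iterates. The update rule $r_{n+1} = r_n - g_n$ is precisely the map of Lemma~\ref{nopart} (the $g_n$ computed in line~\ref{rowfind} is exactly the vector $g$ there), so that lemma guarantees each $r_n$ is feasible for \eqref{eq:mpalp}; this is already noted in the text preceding the theorem. Feasibility of $r_n$ means $\Phi \om r_n \geq T\Phi \om r_n$, so for every column $j$ and every state $s$,
\begin{align}
\phi_j(s) + r_n(j) \geq (\Phi \om r_n)(s) \geq (T\Phi \om r_n)(s), \nn
\end{align}
and taking the minimum over $s$ gives $g_n(j) \geq 0$ for each $j$, i.e. $g_n \geq \mathbf{0}$. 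Consequently $r_{n+1} = r_n - g_n \leq r_n$, so $\{r_n\}$ is non-increasing. Finally, since each $r_n$ is feasible, Corollary~\ref{greater} yields $r_n \geq r^*$, where $r^*$ is the unique optimal solution of \eqref{eq:mpalp}; thus the sequence is bounded below.

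Next I would introduce the linear potential $\Psi(r) = \mathbf{1}^\top r = \sum_{j=1}^k r(j)$ and track its decrease. Because $g_n \geq \mathbf{0}$, and because the loop runs only while the stopping test fails (so $||g_n||_\infty > \epsilon$), we obtain
\begin{align}
\Psi(r_{n+1}) = \Psi(r_n) - \sum_{j=1}^k g_n(j) \leq \Psi(r_n) - ||g_n||_\infty < \Psi(r_n) - \epsilon. \nn
\end{align}
Hence each executed iteration lowers $\Psi$ by more than $\epsilon$. On the other hand, $r_n \geq r^*$ gives the uniform lower bound $\Psi(r_n) \geq \mathbf{1}^\top r^*$. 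Combining the two, after $N$ iterations $\mathbf{1}^\top r_0 - N\epsilon > \Psi(r_N) \geq \mathbf{1}^\top r^*$, which can hold only for $N < (\mathbf{1}^\top r_0 - \mathbf{1}^\top r^*)/\epsilon$. Therefore the while loop exits after at most $\lceil (\mathbf{1}^\top r_0 - \mathbf{1}^\top r^*)/\epsilon \rceil$ iterations.

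I do not expect a genuine obstacle; the only real work is assembling the ingredients in the right order and choosing the potential $\Psi$. The subtle point to state carefully is that $g_n \geq \mathbf{0}$ together with $||g_n||_\infty > \epsilon$ gives $\sum_j g_n(j) \geq ||g_n||_\infty > \epsilon$, which is exactly what converts componentwise monotonicity into a finite iteration count. (If one only wanted termination without an explicit bound, monotone-bounded convergence of $\{r_n\}$ alone already forces $g_n = r_n - r_{n+1} \to \mathbf{0}$, so eventually $||g_n||_\infty \leq \epsilon$.)
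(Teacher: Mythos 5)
Your proof is correct, but it is not the paper's argument---and it is actually tighter than the paper's. The paper's one-line proof uses the objective $c^\top \Phi\om r_n$ as the decreasing potential: it asserts that while the stopping test fails, this quantity drops by at least $\min_i c(i)\,\epsilon$ per iteration, and concludes from lower-boundedness of the objective. That per-step decrease is never justified, and it can in fact fail: if the column $j^*$ attaining $||g_n||_\infty$ does not \emph{participate} in any row at $r_n$, the update can leave the value function unchanged. Concretely, take two self-looping states with unit reward, $\alpha=\tfrac12$ (so $J^*=(2,2)$), $\phi_1=(0,0)$, $\phi_2=(0,10)$, $r_n=(2,7)$: then $\Phi\om r_n=(2,2)$ is feasible, $g_n=(0,5)$ so the loop executes, yet $r_{n+1}=(2,2)$ gives $\Phi\om r_{n+1}=(2,2)=\Phi\om r_n$, so $c^\top\Phi\om r$ does not move at all in that iteration. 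Your potential $\Psi(r)=\mathbf{1}^\top r$ lives in weight space rather than value space, and that is precisely what repairs this: since $g_n\geq\mathbf{0}$ (your feasibility argument via Lemma~\ref{nopart}), $\Psi$ drops by $\sum_j g_n(j)\geq ||g_n||_\infty>\epsilon$ in \emph{every} executed iteration whether or not $\Phi\om r_n$ moves, and the lower bound $r_n\geq r^*$ from Corollary~\ref{greater} then forces termination, with the explicit iteration bound $N< (\mathbf{1}^\top r_0-\mathbf{1}^\top r^*)/\epsilon$ as a bonus that the paper's argument does not deliver even where it works. So the two proofs share the same skeleton (strictly decreasing potential, bounded below), but your choice of potential closes a genuine gap in the paper's version and quantifies the iteration count.
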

\begin{proof}
Suppose not, then at each step, the value function decreases by at least $\min_i c(i) \epsilon$. However the objective function is lower bounded. The claim follows.
\end{proof}
It is important to note that when $||g||_\infty=0$, $r_{opt}$ is an $\emph{active}$-point, (Definition~\ref{actp}), i.e., the optimal solution. For any other $\epsilon >0$, $r_{opt}$ is in the $\epsilon-$neighbourhood of the $\emph{active}$ point, as characterized by the following Lemmas.
\begin{lemma}
Let $v \in \R^k$ be any positive vector with $||v||_\infty >\frac{\epsilon}{1-\alpha}$, and $r_{new}$ defined as $r_{new}\stackrel{\Delta}{=}r_{opt}-v$. Then $r_{new}$ is not feasible.
\end{lemma}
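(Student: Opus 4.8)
The plan is to mirror the argument of Lemma~\ref{neg}, replacing the exact active-row equality (which held because $\tilde{r}$ was an active-point) by the approximate stopping condition $||g||_\infty \leq \epsilon$ that $r_{opt}$ satisfies on exit from Algorithm~\ref{algo1}. Since the while loop terminates only when $||g_n||_\infty \leq \epsilon$, the returned point $r_{opt}$ satisfies $g(j) \leq \epsilon$ for every coordinate $j$, where $g(j)=\min_{s \in S}(\phi_j(s)+r_{opt}(j)-(T\Phi\om r_{opt})(s))$; this plays the role that activeness played in Lemma~\ref{neg}.

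First I would set $j \stackrel{\Delta}{=} \arg\max_{p} v(p)$, so that $v(j) = ||v||_\infty > \frac{\epsilon}{1-\alpha}$, and let $i$ be a row attaining the minimum defining $g(j)$, i.e. $g(j) = \phi_j(i)+r_{opt}(j)-(T\Phi\om r_{opt})(i) \leq \epsilon$. The goal is to exhibit a violated row at $r_{new}$, i.e. to show $(\Phi\om r_{new})(i) < (T\Phi\om r_{new})(i)$ for this particular $i$.

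Two one-sided estimates are then combined. For the upper bound on $(\Phi\om r_{new})(i)$, I use that the $\minp$ product is a pointwise minimum over columns, so $(\Phi\om r_{new})(i) \leq \phi_j(i)+r_{new}(j) = \phi_j(i)+r_{opt}(j)-v(j) \leq (T\Phi\om r_{opt})(i) + \epsilon - v(j)$, the last step by the stopping bound on $g(j)$. (Note that, unlike Lemma~\ref{neg}, I do not need column $j$ to participate at $r_{new}$; the trivial upper bound suffices.) For the lower bound on $(T\Phi\om r_{new})(i)$, since $v \leq v(j)\mathbf{1}$ componentwise we have $r_{new} \geq r_{opt} - v(j)\mathbf{1}$, hence $\Phi\om r_{new} \geq \Phi\om r_{opt} - v(j)\mathbf{1}$; applying monotonicity (Lemma~\ref{monotone}) and the shift property (Lemma~\ref{shift}) gives $T\Phi\om r_{new} \geq T\Phi\om r_{opt} - \alpha v(j)\mathbf{1}$, so $(T\Phi\om r_{new})(i) \geq (T\Phi\om r_{opt})(i) - \alpha v(j)$.

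Subtracting, the $(T\Phi\om r_{opt})(i)$ terms cancel and I obtain $(\Phi\om r_{new})(i) - (T\Phi\om r_{new})(i) \leq \epsilon - (1-\alpha)v(j)$. Since $v(j) = ||v||_\infty > \frac{\epsilon}{1-\alpha}$ forces $(1-\alpha)v(j) > \epsilon$, the right-hand side is strictly negative, so row $i$ is violated and $r_{new}$ is infeasible. The one point needing care is recognizing that the exact equality $(\Phi\om\tilde{r})(i)=(T\Phi\om\tilde{r})(i)$ used in Lemma~\ref{neg} degrades here precisely into the additive slack $\epsilon$ coming from the termination test, which is exactly why the threshold $\frac{\epsilon}{1-\alpha}$ appears and why recovering the clean strict inequality requires $v(j)$ to exceed it.
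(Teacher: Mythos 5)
Your proof is correct and takes essentially the same route as the paper's: pick $j=\arg\max_p v(p)$, use the termination condition $||g||_\infty\leq\epsilon$ to extract a row $i$ with $\phi_j(i)+r_{opt}(j)-(T\Phi\om r_{opt})(i)\leq\epsilon$, and combine monotonicity with Lemma~\ref{shift} (via $v\leq v(j)\mathbf{1}$) to exhibit a violated row at $r_{new}$. If anything, your write-up is more careful than the paper's terse inequality chain: by bounding $(\Phi\om r_{new})(i)$ directly by $\phi_j(i)+r_{opt}(j)-v(j)$ and comparing against $(T\Phi\om r_{opt})(i)+\epsilon$, you avoid the participation assumption that the paper's first inequality leaves implicit, and you correctly keep $v(j)$ rather than $\frac{\epsilon}{1-\alpha}$ through the estimate.
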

\begin{proof}
Let $j=\arg\max^k_{p=1} v(p)$. Now from line~\ref{rowfind} of Algorithm~\ref{algo1}, there is an $i \ni (\Phi \om r_{opt})(i)-(T\Phi \om r_{opt})(i)<\epsilon$. Now
\begin{align}
&(\Phi \om r_{new})(i)-(T\Phi \om r_{new})(i)\nn\\
&\leq (\Phi \om r_{opt})(i)-(T\Phi \om r_{new})(i)-\frac{\epsilon}{1-\alpha}\nn\\ 
& \leq (\Phi \om r_{opt})(i)-(T\Phi \om r_{opt})(i)-\frac{\epsilon}{1-\alpha}+\alpha\frac{\epsilon}{1-\alpha}, \nn\\
&\leq 0.	
\end{align}	
\end{proof}
\begin{corollary}
$r_{opt}-r^*<\frac{\epsilon}{1-\alpha}$, where $r^*$ is the optimal solution to \eqref{eq:mpalp} and $r_{opt}$ is the solution returned by Algorithm~\ref{algo1}.
\end{corollary}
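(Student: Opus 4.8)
The plan is to read the claimed bound $r_{opt}-r^*<\frac{\epsilon}{1-\alpha}$ componentwise; since we will first establish $r_{opt}\geq r^*$, this is the same as the sup-norm statement $||r_{opt}-r^*||_\infty\leq\frac{\epsilon}{1-\alpha}$. The whole corollary then falls out of combining two facts already proved: Corollary~\ref{greater}, which says every feasible point of \eqref{eq:mpalp} dominates the optimal solution $r^*$, and the contrapositive of the Lemma immediately preceding this corollary, which says that subtracting from $r_{opt}$ any positive vector of sup-norm exceeding $\frac{\epsilon}{1-\alpha}$ destroys feasibility.

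Concretely, first I would record that $r_{opt}$ is itself feasible for \eqref{eq:mpalp}: this is the loop invariant noted after Lemma~\ref{nopart}, since each update $r_{n+1}=r_n-g_n$ is exactly the feasibility-preserving step of Lemma~\ref{nopart}. Next, $r^*$ is feasible by definition, being the optimizer of \eqref{eq:mpalp}. Applying Corollary~\ref{greater} to the feasible point $r_{opt}$ then gives $r_{opt}\geq r^*$, so the vector $v:=r_{opt}-r^*$ is nonnegative; if $v=\mathbf{0}$ the bound is immediate, so assume $v\neq\mathbf{0}$.

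The crux is then to observe that $r^*=r_{opt}-v$ is feasible. By the contrapositive of the preceding Lemma, a feasible point of the form $r_{opt}-v$ cannot have $||v||_\infty>\frac{\epsilon}{1-\alpha}$; hence $||v||_\infty\leq\frac{\epsilon}{1-\alpha}$, which is exactly the asserted componentwise inequality $r_{opt}-r^*<\frac{\epsilon}{1-\alpha}$.

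I expect the one point needing care — the main (though minor) obstacle — to be the positivity hypothesis in the preceding Lemma: $v=r_{opt}-r^*$ is only guaranteed nonnegative, so it may have zero coordinates, whereas the Lemma is stated for a strictly positive $v$. The resolution is that the Lemma's infeasibility computation engages only the maximal coordinate $j=\arg\max_p v(p)$ and the near-active row $i$ supplied by line~\ref{rowfind} of Algorithm~\ref{algo1} (where $(\Phi\om r_{opt})(i)-(T\Phi\om r_{opt})(i)<\epsilon$); since $v\leq v(j)\mathbf{1}$ holds for any nonnegative $v$ and $v(j)=||v||_\infty>0$ once $v\neq\mathbf{0}$, the same estimate goes through and the vanishing coordinates are harmless. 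The only remaining bookkeeping is reconciling the strict versus non-strict inequality in the final bound, which follows the convention already adopted in the preceding Lemma's termination estimate.
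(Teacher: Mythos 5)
Your proposal is correct and takes essentially the same route as the paper: the paper's own (three-line) proof likewise invokes Corollary~\ref{greater} to conclude $r^*\leq r_{opt}$, sets $v=r_{opt}-r^*$, and applies the contrapositive of the preceding Lemma to bound $||v||_\infty$. Your additional care regarding the Lemma's strict-positivity hypothesis versus the mere nonnegativity of $v$ (resolved by noting the argument only engages the maximal coordinate $v(j)=||v||_\infty$ and the estimate $v\leq v(j)\mathbf{1}$) patches a detail the paper silently glosses over, but does not change the structure of the argument.
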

\begin{proof}
We know that $r^*\leq r_{opt}$. Let $v=r_{opt}-r^*$. Now $||v||_\infty<\frac{\epsilon}{1-\alpha}$.
\end{proof}

\section{Experiments}\label{exper}
We test our MPADP algorithm (Algorithm~\ref{algo1}) on a $10 \times 10$ grid world problem. There are a total of $100$ states, i.e., $S=\{1,2,\ldots,100\}$, the co-ordinate $(x_i,y_j)$ is encoded as the state $s=(i-1)\times 10+j$. The reward matrix is as given in Table~\ref{reward}, where each entry is an integer between $1$ and $10$. 
The grid world problem is used to model terrain exploration by autonomous decision making agents (robots).
In each grid position, the agent has $8$ actions corresponding to the $8$ possible directions. In the corners, fewer directions are feasible, and the rest of the directions lead to the current grid position. So $A=\{1,2,\ldots,8\}$. Actions fail with probability of $0.1$ and no movement is made and the same grid position is retained, i.e., $p_a(s,s)=0.1, a \in A, s \in S$, and with probability $0.9$ the agent reaches the intended grid position.
\begin{table}
\begin{tabular}{c|c|c|c|c|c|c|c|c|c|c|c|c}\hline
&$x_{1}$	&$x_{2}$	&$x_{3}$	&$x_{4}$	&$x_{5}$	&$x_{6}$	&$x_{7}$	&$x_{8}$	&$x_{9}$	&$x_{10}$	\\ \hline 
$y_{1}$	&2	&5	&9	&5	&8	&3	&6	&10	&7	&3	\\ \hline 
$y_{2}$	&10	&10	&7	&1	&4	&4	&3	&8	&4	&4	\\ \hline 
$y_{3}$	&1	&2	&4	&10	&3	&9	&8	&5	&9	&5	\\ \hline 
$y_{4}$	&8	&3	&6	&10	&5	&1	&2	&5	&6	&3	\\ \hline 
$y_{5}$	&9	&2	&5	&5	&1	&1	&7	&5	&4	&9	\\ \hline 
$y_{6}$	&9	&2	&1	&5	&2	&2	&2	&4	&10	&2	\\ \hline 
$y_{7}$	&1	&9	&3	&4	&10	&7	&4	&6	&9	&3	\\ \hline 
$y_{8}$	&4	&6	&2	&10	&10	&8	&7	&6	&6	&2	\\ \hline 
$y_{9}$	&3	&6	&2	&4	&6	&7	&8	&9	&7	&3	\\ \hline 
$y_{10}$	&9	&2	&3	&2	&1	&5	&1	&8	&6	&5	\\ \hline 
\end{tabular}
\caption{Grid world with rewards}
\label{reward}
\end{table}
\\
Let $\{\phi_j, j=1,\ldots,k\}, \phi_j \in \Rm^n$ and $\{\phi^i,i=1,\ldots,n\}, \phi^i \in \Rm^k$ be the columns and rows respectively of the feature matrix $\Phi$. Under the feature representation $\Phi$ the similarity of states $s,s'\in S$ is given by the dot product below:
\begin{align}\label{dotp}
<\phi^s,\phi^{s'}>= \phi^s(1)\om\phi^{s'}(1)\op\ldots\op\phi^s(k)\om\phi^{s'}(k).
\end{align}
We desire the following in the feature matrix $\Phi$.
\begin{enumerate}
\item Features $\phi^i$ should have unit norm, i.e., $||\phi^i||=<\phi^i,\phi^i>=\mathbf{0}$, since $\mathbf{0}$ is the multiplicative identity in the $\minp$ algebra.
\item For dissimilar states $s,s'\in S$, we prefer $<\phi^s,\phi^{s'}>=+\infty$, since $+\infty$ is the additive identity in $\minp$ algebra.
\end{enumerate}
Keeping these in mind, we design the feature matrix $\Phi$ for the grid world problem. Since the state space is similar in the connectivity, we aggregate the states based on the reward forming $k$ partitions. Let $g_{\min} = \min_s g(s), s \in S$, $g_{\max}=\max_s g(s), s \in S$ and $L=g_{\max}-g_{\min}$, then we select the features as follows:
\begin{align}\label{feature}
\phi^{s}(i) = \left\{
        \begin{array}{ll}
            0 & : g(s) \in [g_{\min}+\frac{(i-1)L}{k},g_{\min}+\frac{(i)L}{k}] \\
       1000 & : g(s) \notin [g_{\min}+\frac{(i-1)L}{k},g_{\min}+\frac{(i)L}{k}],
        \end{array}
    \right.\nn\\
\forall  i=1,\ldots,k.
\end{align}
We use $1000$ in place of $+\infty$, and set $\epsilon=0$ (see Algorithm~\ref{algo1}). It is easy to verify that $\Phi$ in \eqref{feature} has the enumerated properties. The errors are given in Table~\ref{errtable} for discount factors $0.9$ and $0.99$, where $r_{opt}$ is the result returned by the MPADP in  Algorithm~\ref{algo1}, and $\tilde{u}$ is the greedy policy given by
\begin{align}
\tilde{u}=\underset{a \in A}{\arg\max}\bigg( g(s)+\alpha \sum p_a(s,s')\tilde{J}(s')\bigg), \\\text{where} \mbox{ } \tilde{J}=\Phi\om r_{opt}.\nn
\end{align}
The results are plotted in Figure \ref{Vapp}. Note that $\tilde{J}\geq J^*$. Also the errors in the table obey the error bounds. We also noted that the algorithm finds the optimal actions for about $75$ states.
\begin{table}
\begin{tabular}{|c|c|c|}\hline
Error Term & Error for $\alpha=0.9$ & Error for $\alpha=0.99$\\ \hline
$||J^*-\Phi\om r_{opt}||_\infty$ & $9.2768$ & $18.657$\\ \hline
$||J^*-J_{\tilde{u}}||_\infty$ & $9.3248$ & $99.149$\\ \hline
\end{tabular}
\caption{Error Table}
\label{errtable}
\end{table}

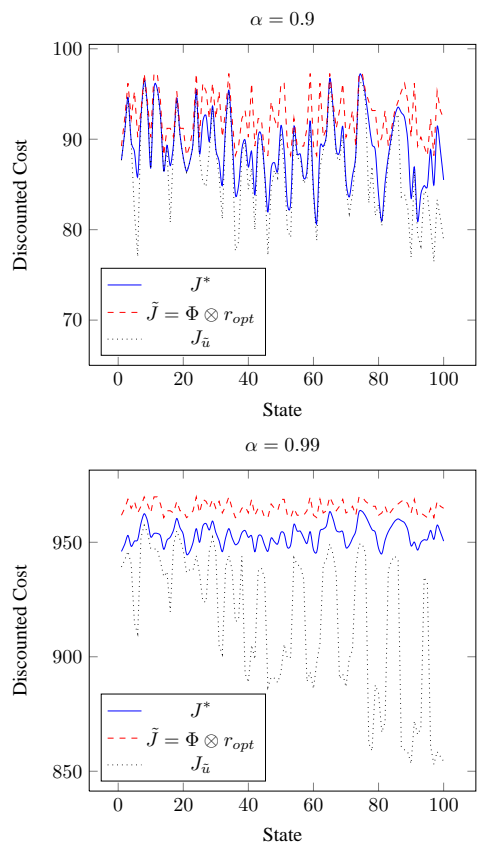
\begin{figure}
\begin{minipage}{0.45\textwidth}
\begin{tabular}{c}
\begin{tikzpicture}[scale=0.75]
    \begin{axis}[
	xlabel=State,
        ylabel=Discounted Cost,
	ymin=65,
	legend pos= south west,
	title={$\alpha=0.9$}
]
    \addplot[smooth,mark=.,blue] plot file {v_p9.dat};
    \addplot[dashed,mark=.,red] plot file {valp_p9.dat};
    \addplot[dotted,mark=.,black] plot file {valppol_p9.dat};
     \legend{$J^*$,$\tilde{J}=\Phi\om r_{opt}$, $J_{\tilde{u}}$}
    \end{axis}
    \end{tikzpicture}
\\

\begin{tikzpicture}[scale=0.75]
    \begin{axis}[
	xlabel=State,
        ylabel=Discounted Cost,
	legend pos=south west,
	title={$\alpha=0.99$}
]
    \addplot[smooth,mark=.,blue] plot file {v_p99.dat};
    \addplot[dashed,mark=.,red] plot file {valp_p99.dat};
    \addplot[dotted,mark=.,black] plot file {valppol_p99.dat};
     \legend{$J^*$,$\tilde{J}=\Phi\om r_{opt}$, $J_{\tilde{u}}$}
    \end{axis}
    \end{tikzpicture}
\end{tabular}
\end{minipage}
\caption{Optimal, Approximate and Greedy Policy Value Function}
\label{Vapp}
\end{figure}
Next we apply the MPADP algorithm to solve the mountain car problem described in the next subsection.
\subsection{Mountain Car}
The problem is to make an underpowered car climb a one-dimensional hill (Figure~\ref{mcar}), whose position $x$ lies in the interval $[-1.2,0.5]$. There are $3$ actions available to the car, i.e., $A=\{0,1,2\}$. $a=0$, $a=2$ correspond to accelerating to left and right respectively. $a=1$ corresponds to no acceleration. The velocity $y$ is limited between $[-0.07,0.07]$. The goal is reached once the car crosses the position $x\geq 0.5$ with a reward of $100$ and everywhere else, the reward is $0$. The dynamics is given by
\begin{align}
y_{t+1}&=y_t+0.001 (a_t-1)-0.0025 cos(3x_t),\\
x_{t+1}&=x_t+y_t.
\end{align}
The state space is continuous with $S=[-1.2,0.5]\times[-0.07,0.07]$ and  the state is given by $s=(x,y), x\in [-1.2,0.5], y \in [-0.07,0.07]$.
The feature vector for state $s$ is
\begin{align}
\phi^s(i)=\big|\beta(\frac{x+1.2}{1.7}-x_i)\big|^\gamma+\big|\beta(\frac{y+0.07}{0.14}-y_i)\big|^\gamma, i=1,\ldots,k,
\end{align}
where $\beta>0$ is a scaling factor and $\gamma>1$ is the order. $(x_i,y_j), i=1,\ldots,k, j=1,\ldots,k$ are the $k \times k$ centers, with $s_{ij}=(x_i,y_j) \in S$. We note that, it is difficult to perform the minimization in line-$3$ of Algorithm~\ref{algo1} over all $s \in S$ and hence we discretize $S$ by means of $k_1 \times k_1$ grid points. These grid points were generated by choosing $x^g_i, i=1,\ldots,k_1$ and $ y^g_j, j=1,\ldots,k_1$, with $s^g_{ij}=(x^g_i,y^g_j)$. 
\\
\begin{figure}
\centering
\includegraphics[scale=1.0]{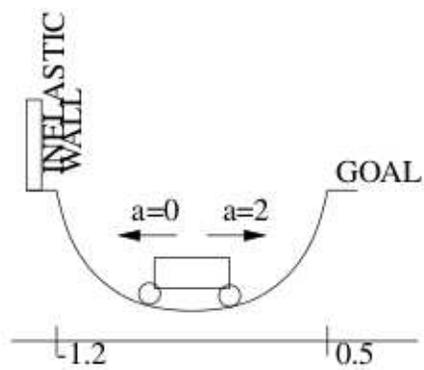}
\caption{Mountain Car}
\label{mcar}
\end{figure}
In our experiments we fixed $\beta=100$ and $\gamma=2$, and varied $k=5, 7, 9, 11$ and $k_1=30,40,50$, and the discount factor was set to $\alpha=0.95$, and $\epsilon=1e^{-5}$. The number of steps taken for the mountain car to reach the goal in each of these settings is presented in Table~\ref{Episode}. The value function learnt in the various cases is presented in Table~\ref{ValFunc}. The actual value function is shown in  Figure~\ref{ActValFunc}. The brighter regions denote higher values and darker regions denote lower values.\\
\begin{table}

\begin{tabular}{|c|c|c|} \hline
$k$ & $k_1$ &Steps to reach the goal\\ \hline	
$5$ &30 &285 \\ \hline
5 &40 &285 \\ \hline
5 &50 &285 \\ \hline

7 &30 &322 \\ \hline
7 &40 &322 \\ \hline
7 &50 &327 \\ \hline

9 &30 &218 \\ \hline
9 &40 &317 \\ \hline
9 &50 &324 \\ \hline

11 &30 &267 \\ \hline
11 &40 &260 \\ \hline
11 &50 &257 \\ \hline
\end{tabular}
\caption{Number of steps taken by the $\emph{Greedy}$ policy}
\label{Episode}
\end{table}\\
Near optimal policy for the mountain car problem is known to achieve the goal within $150$ steps.
\begin{table}

\centering
\begin{tabular}{ccc}
\tiny
\includegraphics[scale=.2]{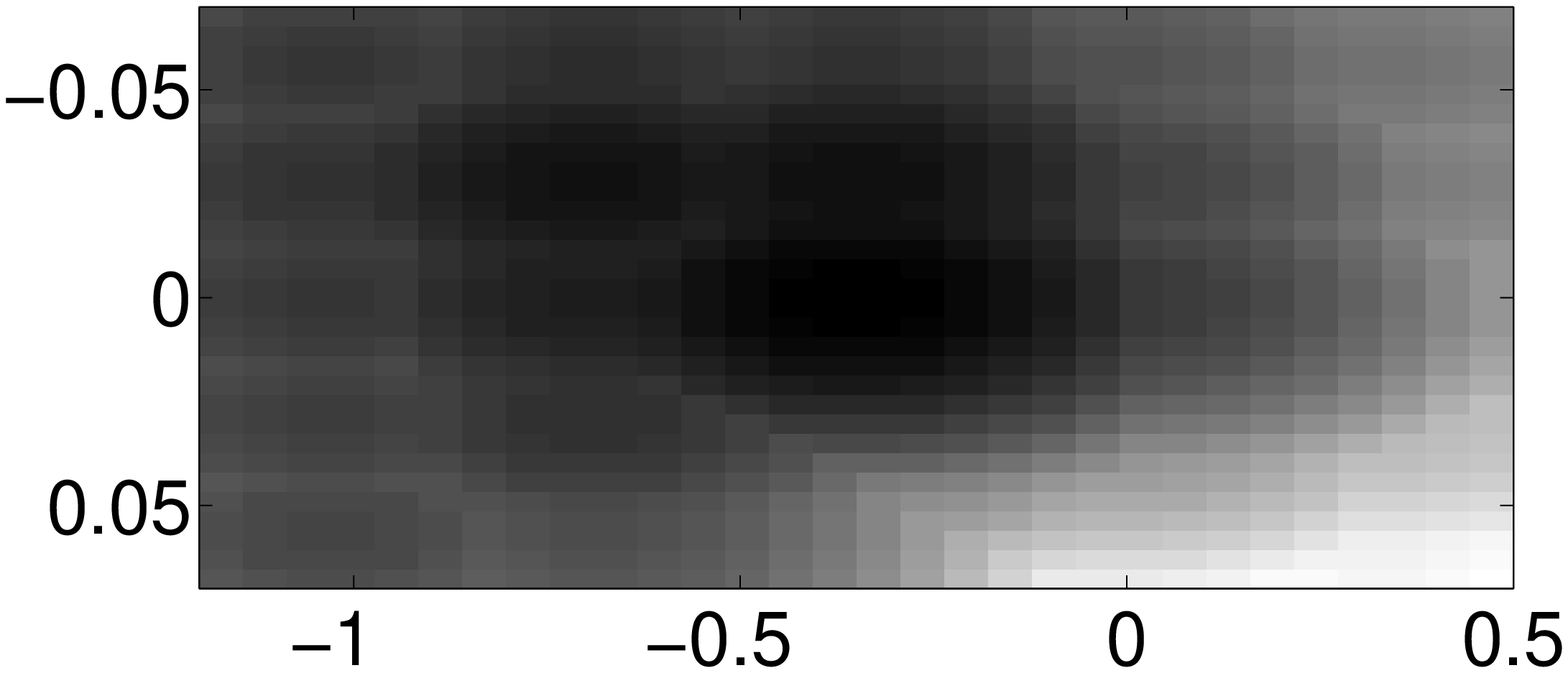}
&
\includegraphics[scale=.2]{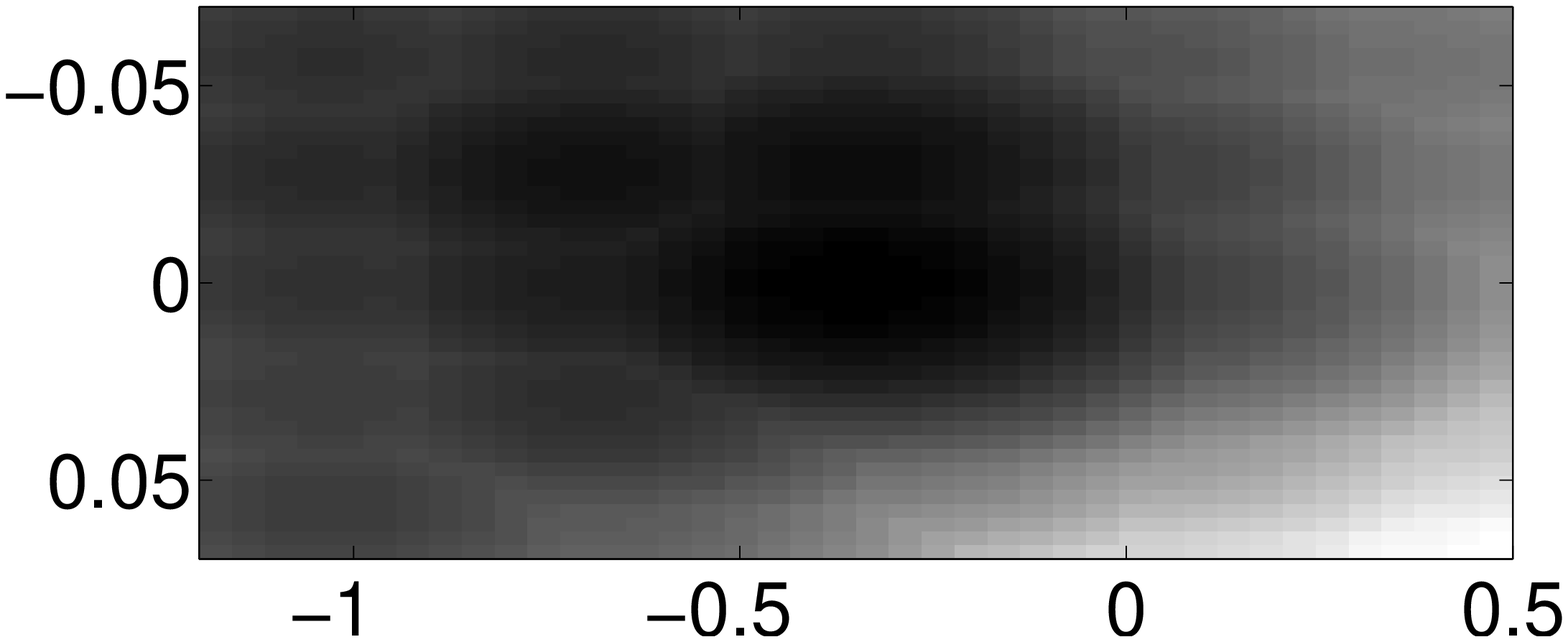}
&
\includegraphics[scale=.2]{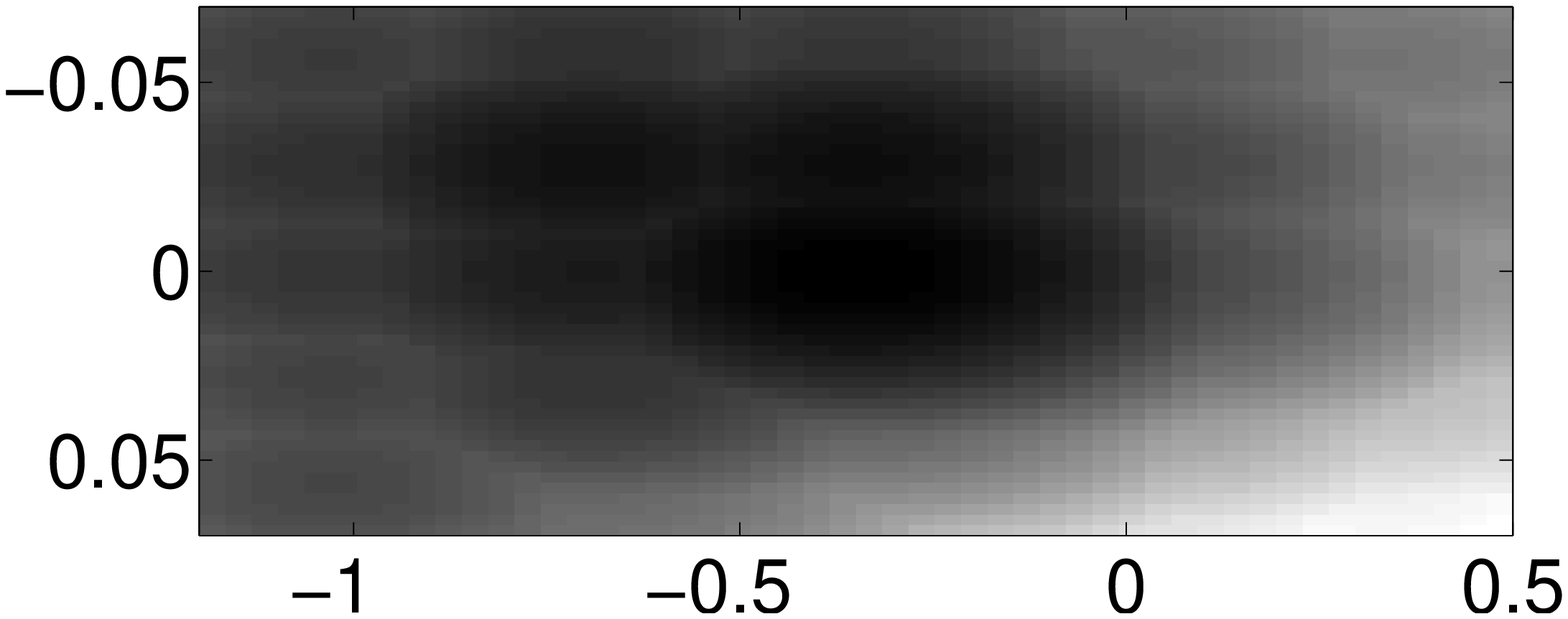}\\
\tiny${k=5,k_1=30,V_{\max}=2.83e3 V_{\min}=0.73e3}$ & \tiny$k=5,k_1=40,V_{\max}=3.29e3, V_{\min}=0.74e3$	&\tiny $k=5,k_1=50,V_{\max}=3.20e3, V_{\min}=0.77e3$\\
\includegraphics[scale=.2]{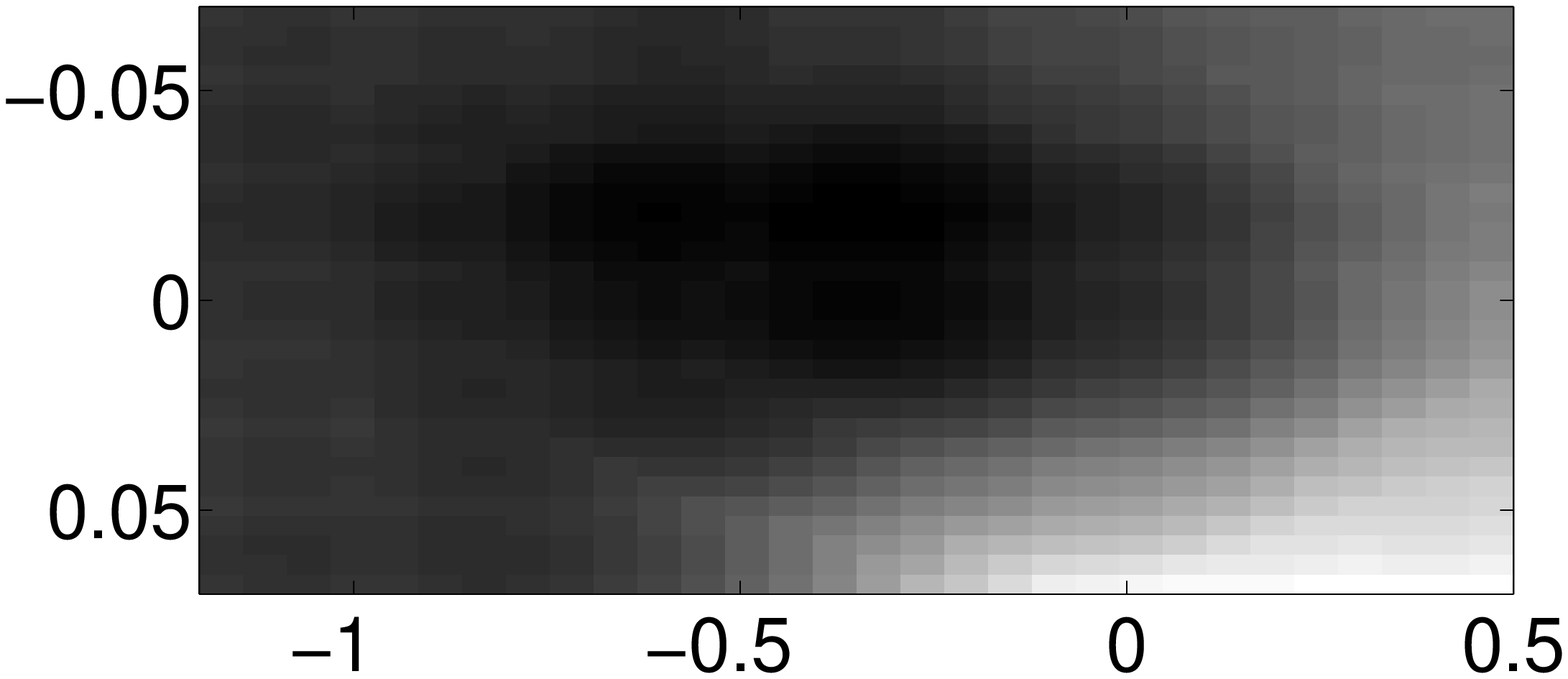}
&
\includegraphics[scale=.2]{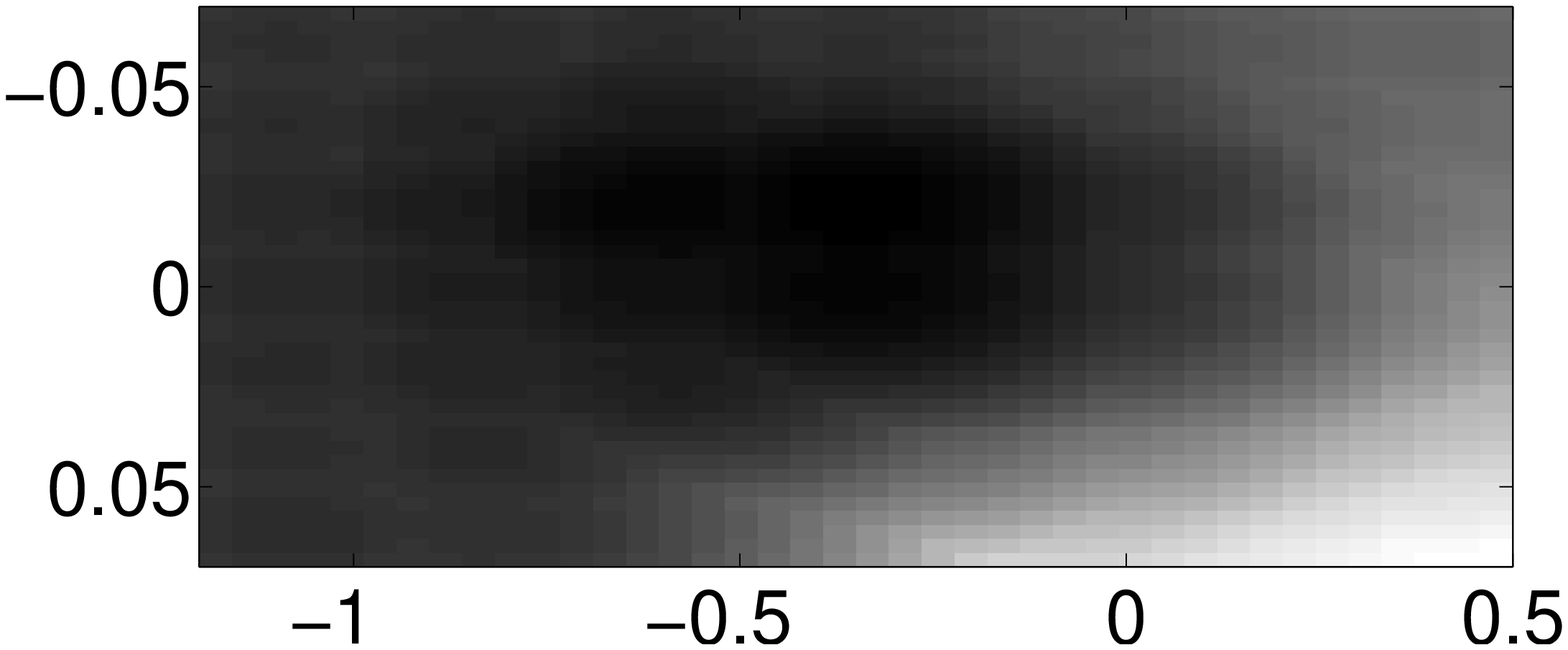}
&
\includegraphics[scale=.2]{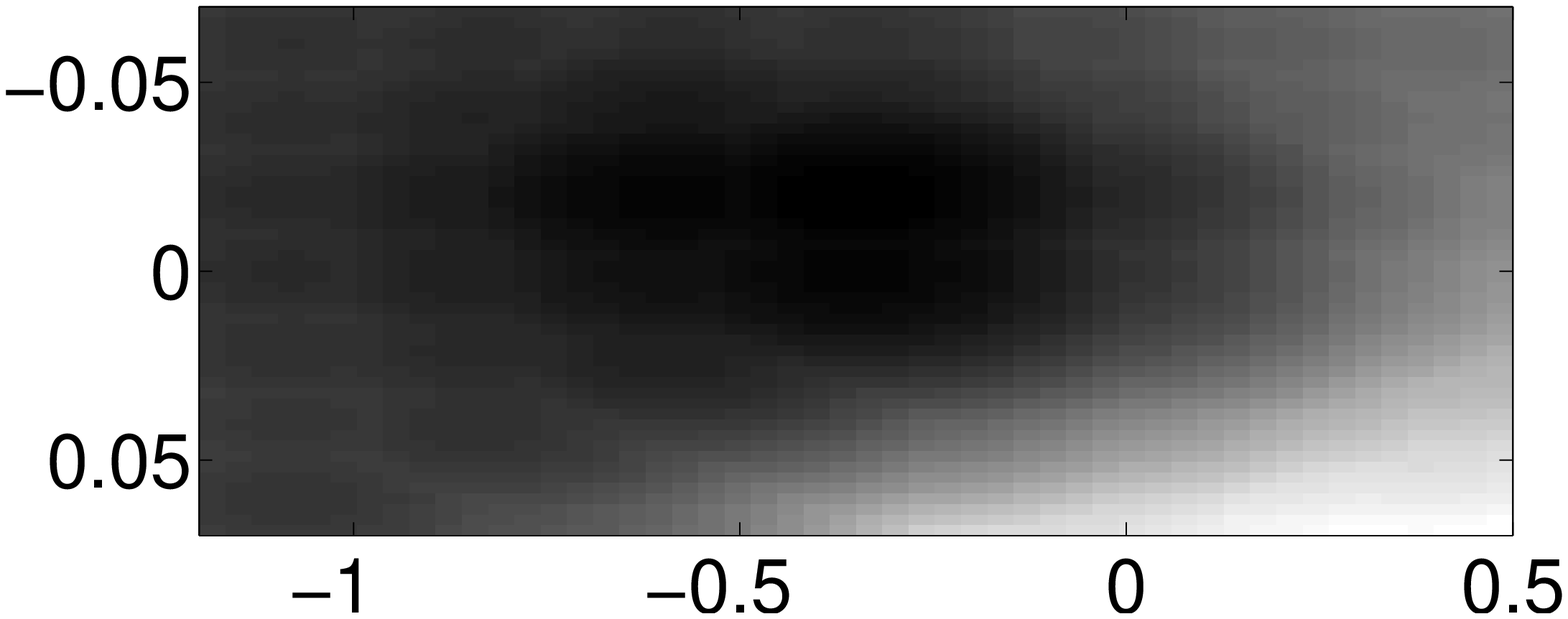}\\
\tiny${k=7,k_1=30,V_{\max}=2.60e3, V_{\min}=0.45e3}$ & \tiny$k=7,k_1=40,V_{\max}=2.83e3, V_{\min}=0.47e3$	&\tiny $k=7,k_1=50,V_{\max}=2.80e3, V_{\min}=0.48e3$\\
\includegraphics[scale=.2]{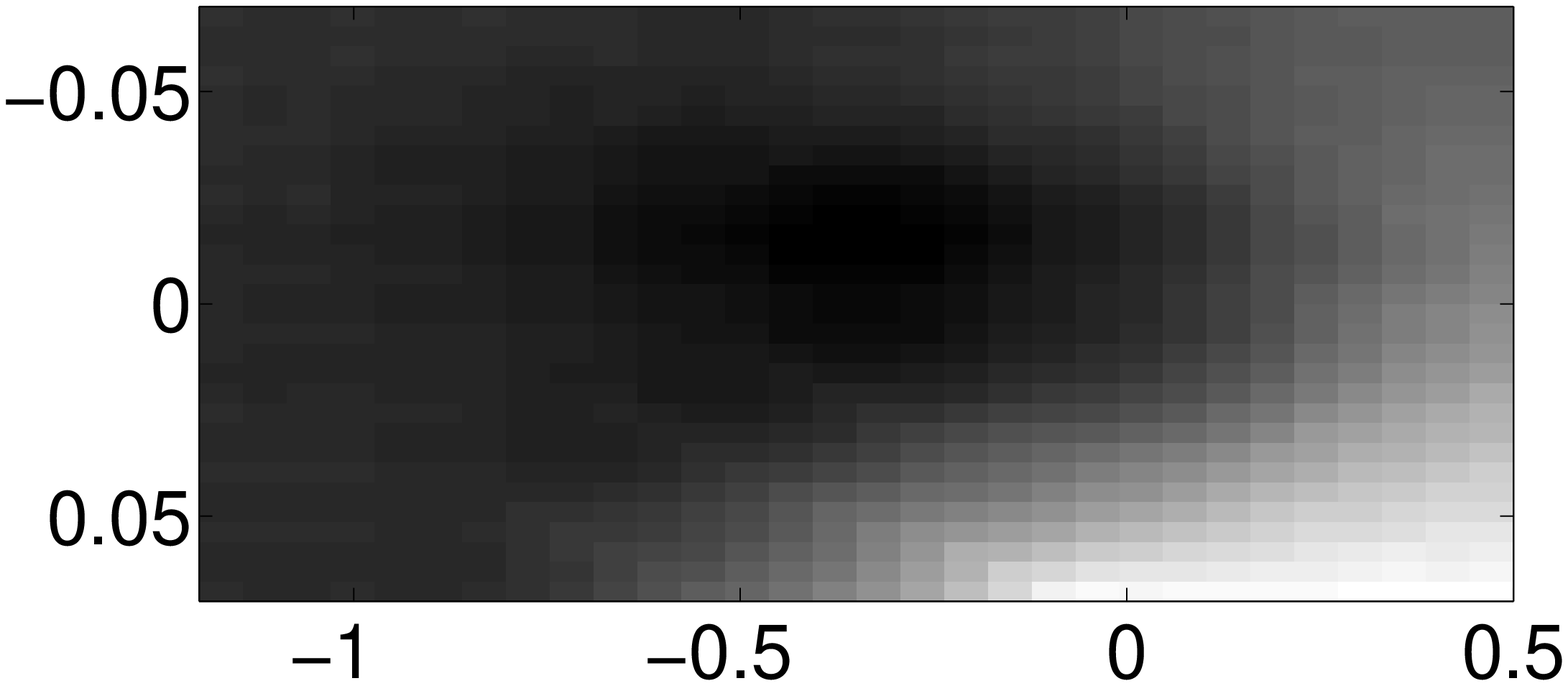}
&
\includegraphics[scale=.2]{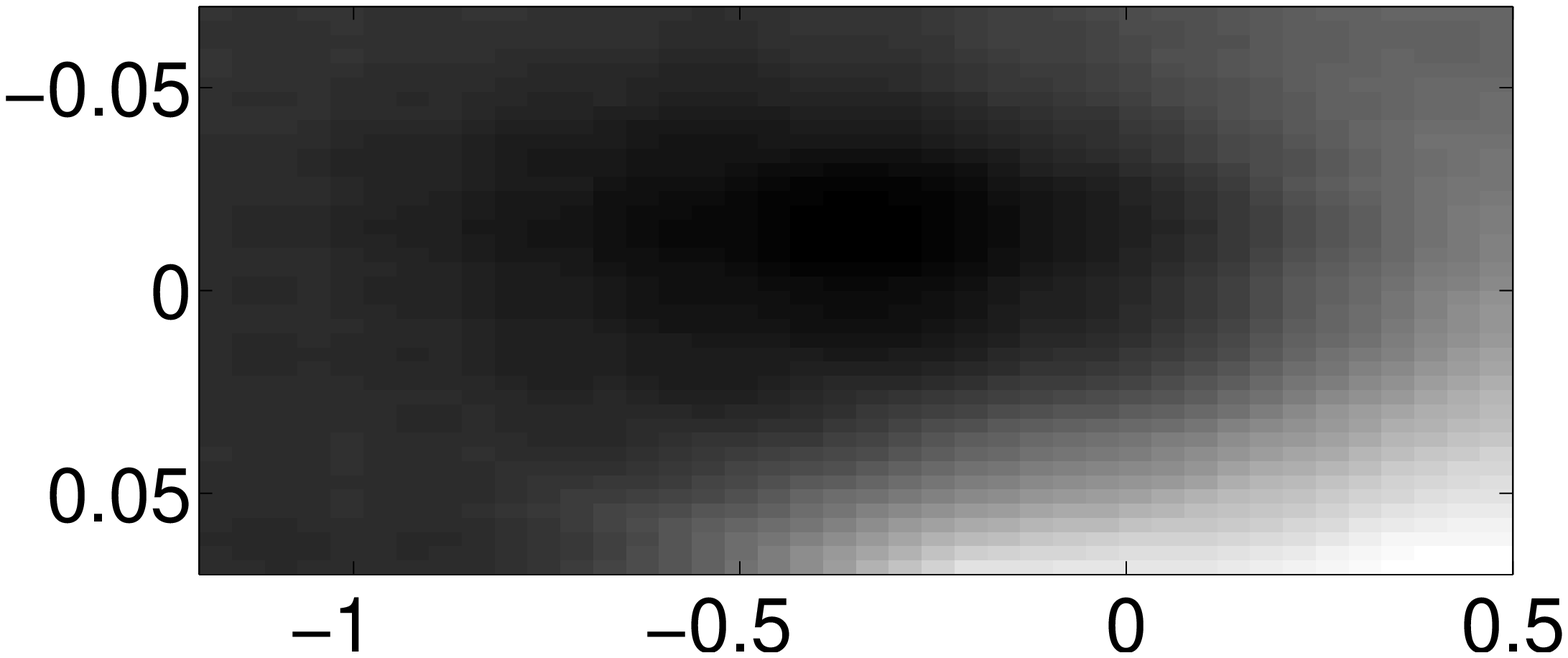}
&
\includegraphics[scale=.2]{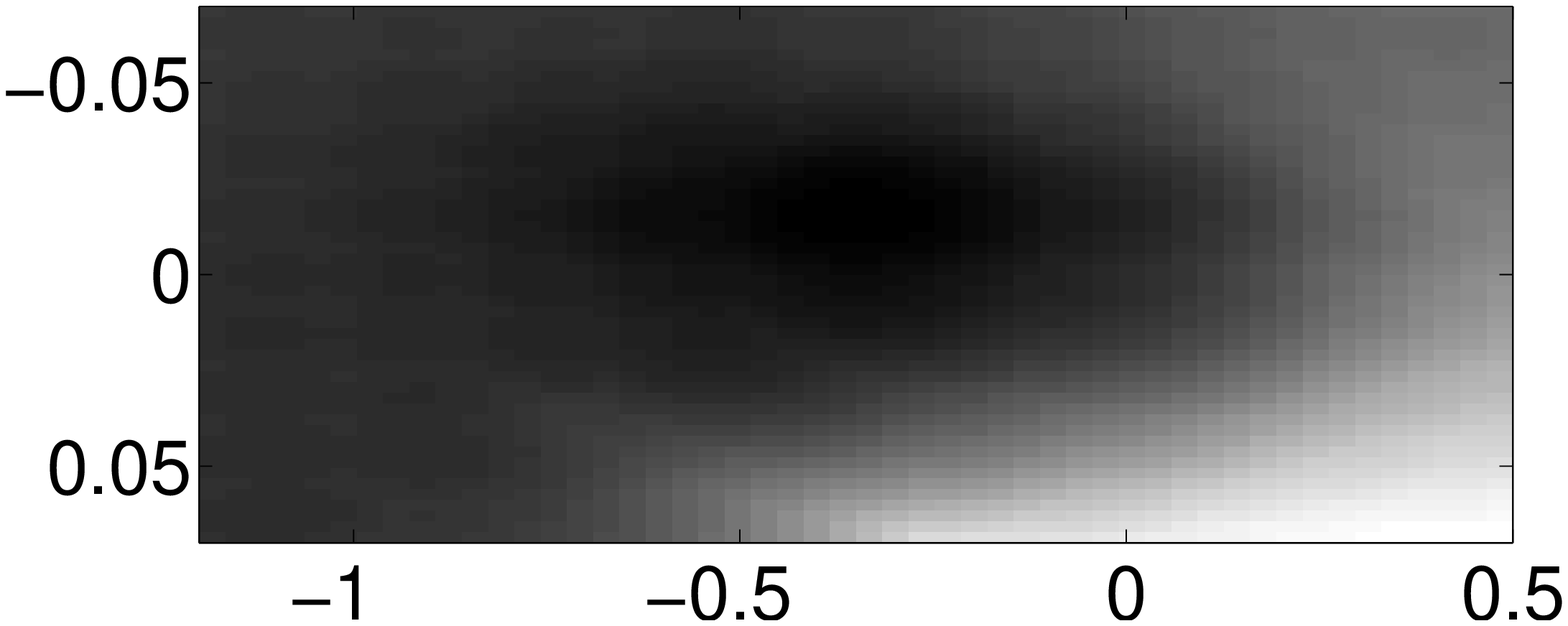}\\
\tiny${k=9,k_1=30,V_{\max}=2.42e3, V_{\min}=0.25e3}$ & \tiny$k=9,k_1=40,V_{\max}=2.58e3, V_{\min}=0.27e3$	&\tiny $k=9,k_1=50,V_{\max}=2.57e3, V_{\min}=0.26e3$\\
\includegraphics[scale=.2]{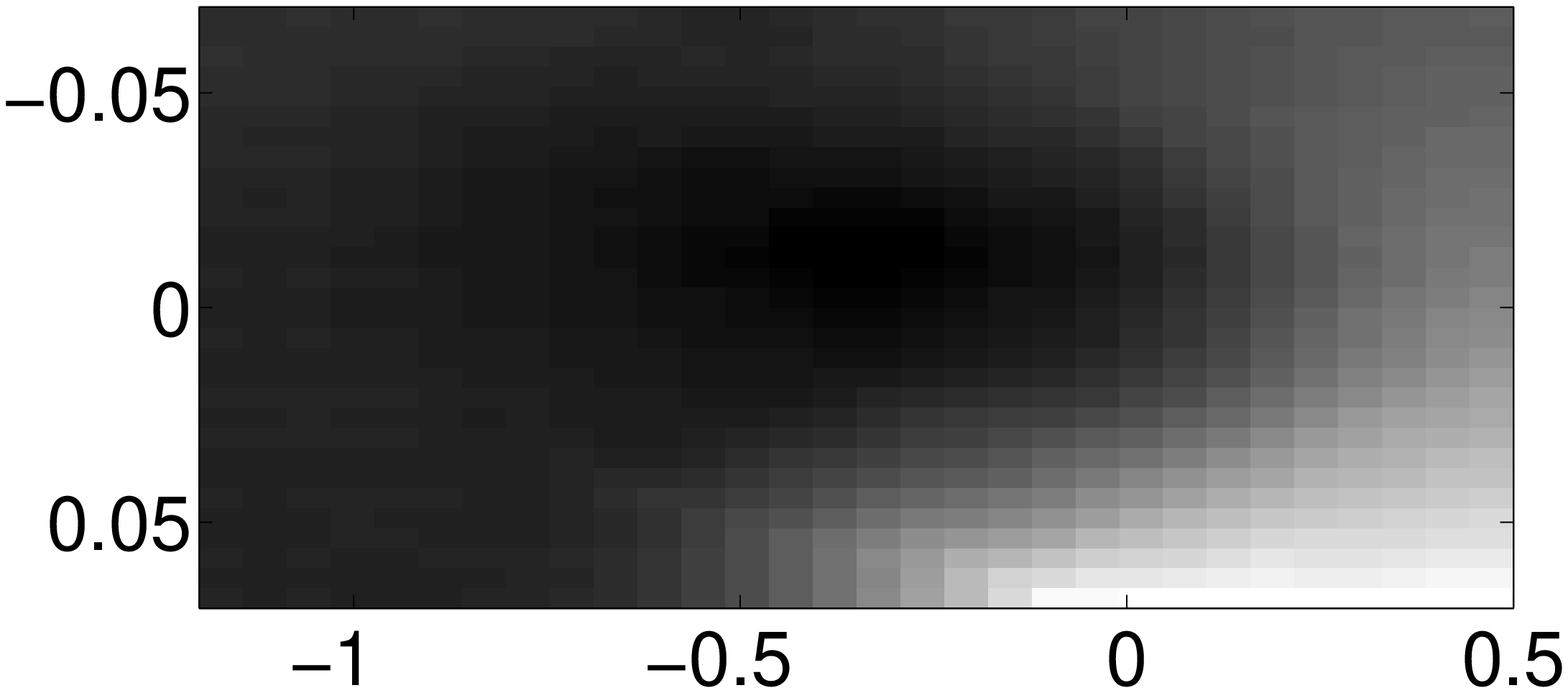}
&
\includegraphics[scale=.2]{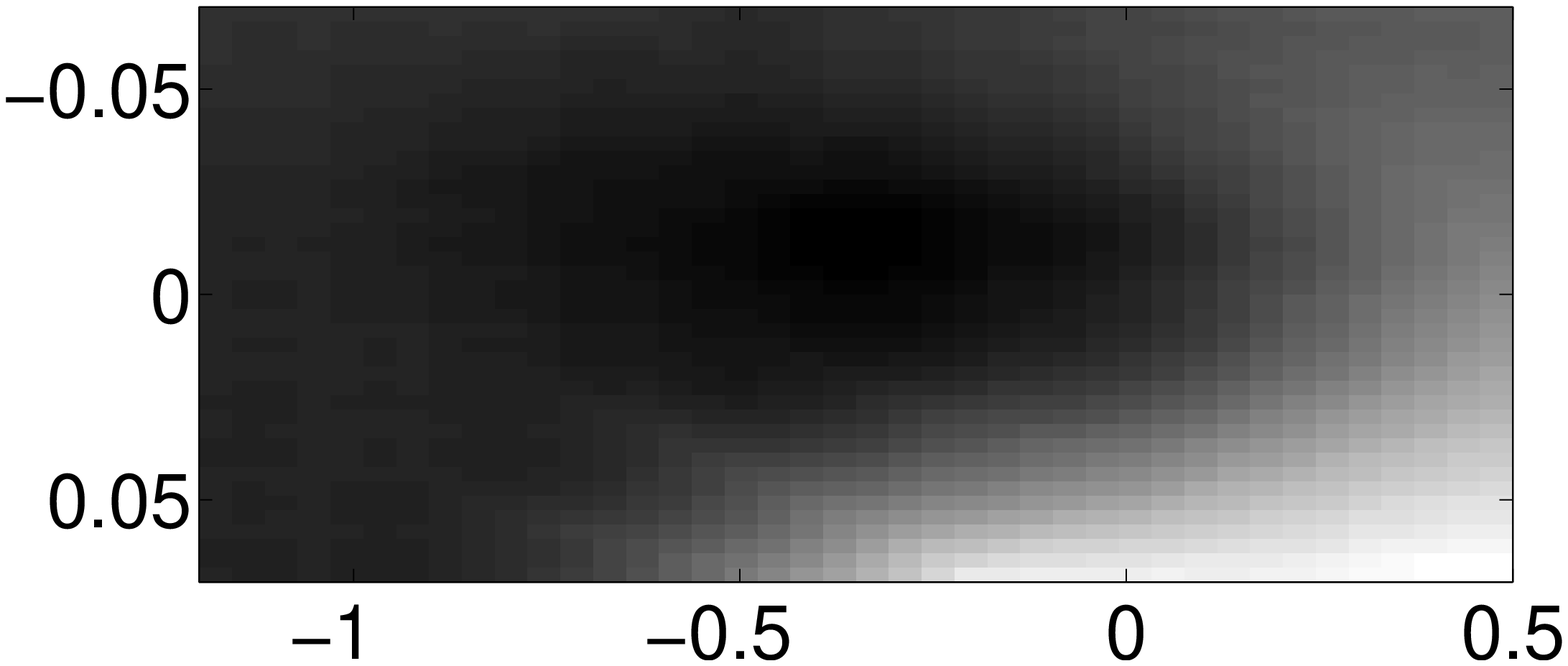}
&
\includegraphics[scale=.2]{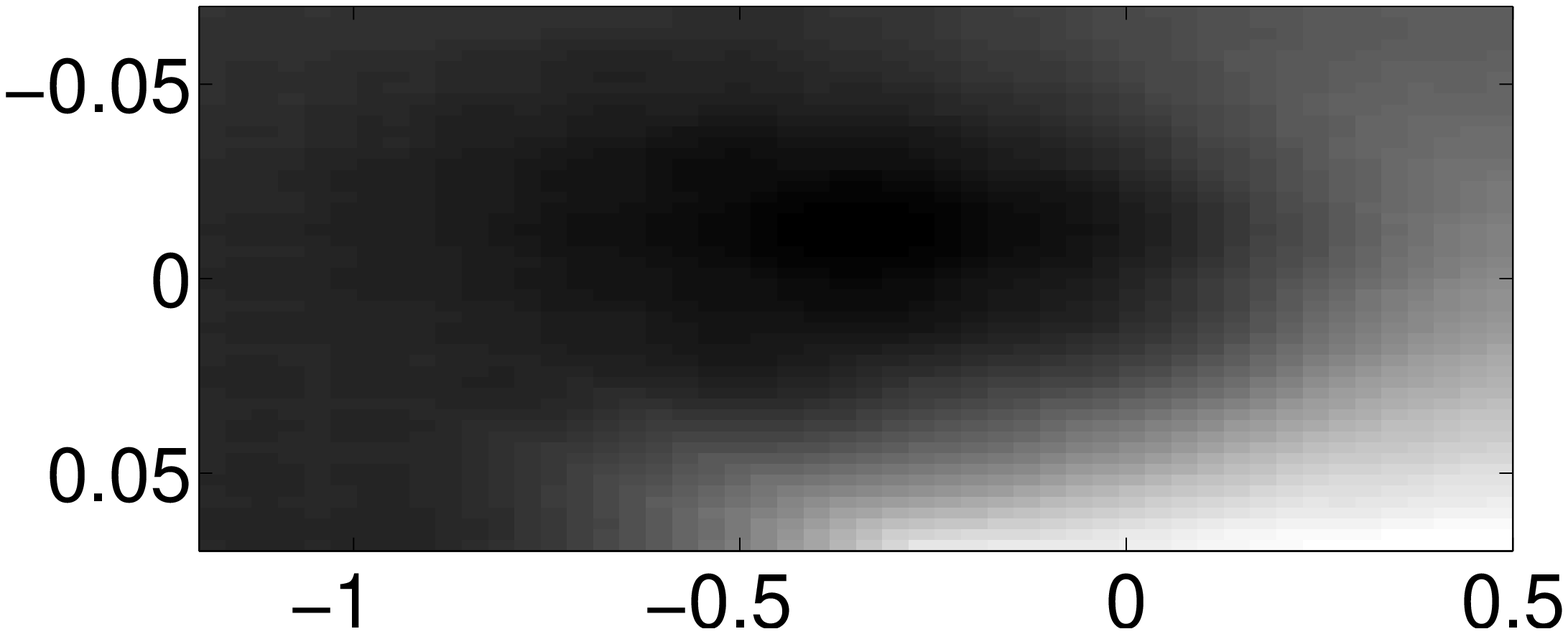}\\
\tiny${k=11,k_1=30,V_{\max}=2.32e3, V_{\min}=0.18e3}$ & \tiny$k=11,k_1=40,V_{\max}=2.42e3, V_{\min}=0.20e3$	&\tiny $k=11,k_1=50,V_{\max}=2.42e3, V_{\min}=0.20e3$
\end{tabular}
\caption{Approximate Value Function for various values of $k$ and $k_1$}
\label{ValFunc}
\end{table}

\begin{figure}
\centering
\begin{tabular}{c}
\tiny
\includegraphics[scale=.3]{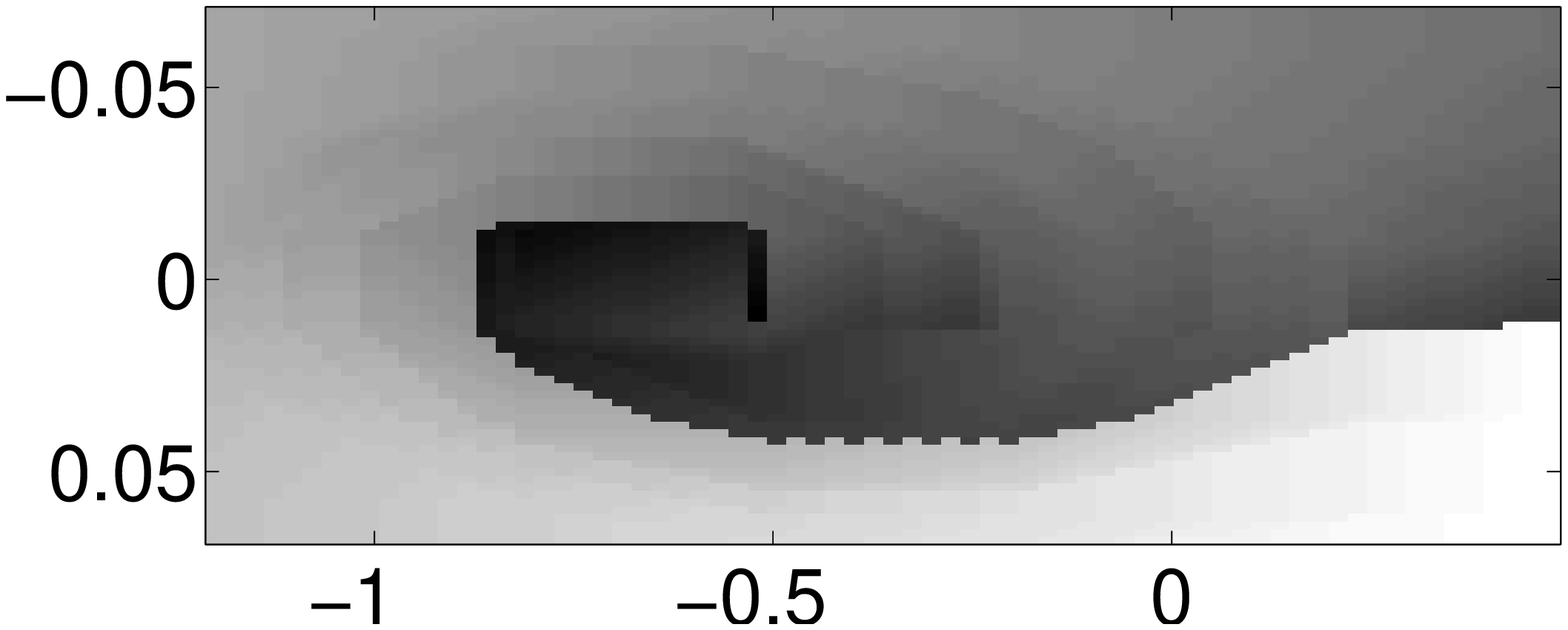}
\end{tabular}
\caption{Actual Value function}
\label{ActValFunc}
\end{figure}

\section{Conclusion}\label{concl}
We introduced a novel ADP method to approximate the value function of infinite horizon discounted reward MDP. The novelty was in the use of $\minp$ linear basis as opposed to the conventional linear basis. Our approximate value function belonged to the subsemimodule formed by the $\minp$ linear span of the basis and obeyed the $\minp$ Projected Bellman Equation (MPPBE). The salient feature of the approximate value function was that the error was bounded in the $L_\infty$ norm. We also presented the MPADP algorithm (Algorithm~\ref{algo1}) to solve the MPPBE and showed that the algorithm converges to the desired solution. We also applied our method on two example problems.\\
\indent The use of $\minp$ LFAs in ADP methods is quite new and there are several interesting directions that can be furthered. A question of immediate interest is to find the possibilities of a reinforcement learning (RL) algorithm based on $\minp$ LFA, that solve MDP in the absence of model information. It will be interesting to investigate whether it is possible to develop $Q$-learning algorithm using $\minp$ LFA. Also, further research is required to find the right choice of basis functions in the new algebra. These might together throw light on the right kind of LFA architecture to be chosen for any given problem.

\clearpage
\bibliographystyle{plain}
\bibliography{ref}
\end{document}